 \newtheorem{theorem}{Theorem}[section]
 \newtheorem{lemma}[theorem]{Lemma}
 \newtheorem{definition}[theorem]{Definition}
\def\GrabProofArgument[#1]{ #1: \egroup\ignorespaces}
\def\proof{\noindent\textbf\bgroup Proof%
	\@ifnextchar[{\GrabProofArgument}{. \egroup\ignorespaces}}
\newcommand{\opt}{\text{opt}}
\newcommand*\samethanks[1][\value{footnote}]{\footnotemark[#1]}
\newcommand{\wopt}{w_{\opt}}
\newcommand{\eat}[1]{}
\newcommand{\pcopt}{\text{opt\textsubscript{IP}}}
\newcommand{\lpopt}{\text{opt\textsubscript{SF}}}
\DeclareMathOperator*{\argmin}{arg\,min}
\newcommand{\dbsf}{\text{\sc DB-SF}}
\newcommand{\dbst}{\text{\sc DB-ST}}
\newcommand{\ewdbsf}{\text{\sc EW-DB-SF}}
\newcommand{\ewdbst}{\text{\sc EW-DB-ST}}
\newcommand{\odbsf}{\text{\sc online degree-bounded Steiner forest}}
\newcommand{\oewdbsf}{\text{\sc online edge-weighted degree-bounded Steiner forest}}
\DeclareMathOperator{\CC}{CC}
\newcommand{\x}{\mathbf{x}}
\newcommand{\Mc}[1]{\mathcal{#1}}
	\newenvironment{proofof}[1]{{\bf Proof of #1:  }}{\hfill\rule{2mm}{2mm}}
\newcounter{proccnt}
\title{Online Weighted Degree-Bounded Steiner Networks\\via Novel Online Mixed Packing/Covering}
\author{
Sina Dehghani \thanks{University of Maryland. email: \texttt{\{dehghani,ehsani,seddighin\}@umd.edu hajiagha@cs.umd.edu} }
\thanks{Supported in part by NSF CAREER award 1053605, NSF grant CCF-1161626, ONR YIP award N000141110662, DARPA/AFOSR grant FA9550-12-1-0423, and a Google faculty research award.}
\and Soheil Ehsani \samethanks[1] \samethanks[2]
\and MohammadTaghi Hajiaghayi \samethanks[1] \samethanks[2]
\and Vahid Liaghat \thanks{Stanford University. email: vliaghat@stanford.edu}
\and Harald R\"{a}cke  \thanks{Technische Universit\"{a}t M\"{u}nchen. email: raecke@in.tum.de}
\and Saeed Seddighin \samethanks[1] \samethanks[2]
}
\begin{document}

\sloppy

%
%

\date{}
\maketitle
\thispagestyle{empty}

\begin{abstract}
	We design the first online algorithm with poly-logarithmic competitive ratio for the \textit{edge-weighted degree-bounded Steiner forest} (\ewdbsf{}) problem and its generalized variant. We obtain our result by demonstrating a new generic approach for solving mixed packing/covering integer programs in the online paradigm. In \ewdbsf{}, we are given an edge-weighted graph with a degree bound for every vertex. Given a root vertex in advance, we receive a sequence of terminal vertices in an online manner. Upon the arrival of a terminal, we need to augment our solution subgraph to connect the new terminal to the root. The goal is to minimize the total weight of the solution while respecting the degree bounds on the vertices. In the offline setting, \textit{edge-weighted degree-bounded Steiner tree} (\ewdbst{}) and its many variations have been extensively studied since early eighties. Unfortunately, the recent advancements in the online network design problems are inherently difficult to adapt for degree-bounded problems. In particular, it is not known whether the fractional solution obtained by standard primal-dual techniques for mixed packing/covering LPs can be rounded online. In contrast, in this paper we obtain our result by using structural properties of the optimal solution, and reducing the \ewdbsf{} problem to an exponential-size mixed packing/covering integer program in which every variable appears only once in covering constraints. We then design a generic \textit{integral} algorithm for solving this restricted family of IPs.

As mentioned above, we demonstrate a new technique for solving mixed packing/covering integer programs. Define the \textit{covering frequency} $k$ of a program as the maximum number of covering constraints in which a variable can participate. Let $m$ denote the number of packing constraints. We design an online \textit{deterministic integral algorithm} with competitive ratio of $O(k\log m)$ for the mixed packing/covering integer programs.
We prove the tightness of our result by providing a matching lower bound for any randomized algorithm. We note that our solution solely depends on $m$ and $k$. Indeed, there can be exponentially many variables. Furthermore, our algorithm directly provides an integral solution, even if the integrality gap of the program is unbounded. We believe this technique can be used as an interesting alternative for the standard primal-dual techniques in solving online problems.

\end{abstract}

\newpage
\section{Introduction}
\textit{Degree-bounded} network design problems comprise an important family of
network design problems since the eighties. Aside from various real-world
applications such as vehicle routing and communication
networks~\cite{bauer1995degree,oliveira2005survey,voss1992problems}, the family
of degree-bounded problems has been a testbed for developing new ideas and
techniques.
The problem of \textit{degree-bounded spanning tree}, introduced in Garey and
Johnson's \textit{Black Book} of NP-Completeness~\cite{GJ79}, was first
investigated in the pioneering work of F{\"u}rer and Raghavachari~\cite{FR90}
(Allerton'90). In this problem, we are required to find a spanning tree of a
given graph with the goal of minimizing the maximum degree of the vertices in
the tree. Let $b^*$ denote the maximum degree in the optimal spanning tree.
F{\"u}rer and Raghavachari give a parallel approximation algorithm which
produces a spanning tree of degree at most $O(\log(n) b^*)$. This result was
later generalized by Agrawal, Klein, and Ravi~\cite{AKR91a} to the case of
degree-bounded Steiner tree (\dbst{}) and degree bounded Steiner forest
(\dbsf{}) problem. In \dbst{}, given a set of terminal vertices, we need to
find a subgraph of minimum maximum degree that connects the terminals. In the
more generalized \dbsf{} problem, we are given pairs of terminals and the
output subgraph should contain a path connecting each pair. F{\"u}rer and
Raghavachari~\cite{FR94}(SODA'92, J.~of Algorithms'94) significantly improved
the result for \dbsf{} by presenting an algorithm which produces a Steiner
forest with maximum degree at most $b^*+1$.

The study of \dbst{} and \dbsf{} was the starting point of a very popular line
of work on various degree-bounded network design
problems; e.g. \cite{MRSR98,N12,LS13,KKN13,EV14} and more recently~\cite{fukunaga2012iterative,ene2014improved}. One particular variant that has
been extensively studied was initiated by Marathe~\textit{et al.}~\cite{MRSR98}
(J.~of Algorithms'98): In the \textit{edge-weighted degree-bounded spanning
  tree} problem, given a weight function over the edges and a degree bound $b$,
the goal is to find a minimum-weight spanning tree with maximum degree at most
$b$. The initial results for the problem generated much interest in obtaining
approximation algorithms for the edge-weighted degree-bounded spanning tree
problem~\cite{chaudhuri2009would, chaudhuri2006push, goemans2006minimum,
  klein2004approximation, konemann2000matter, konemann2003primal,
  lau2009survivable, raghavachari1996algorithms, ravi2001approximation,
  ravi2006delegate}. The groundbreaking results obtained by Goemans~\cite{G06}
(FOCS'06) and Singh and Lau~\cite{LS07} (STOC'07) settle the problem by giving
an algorithm that computes a minimum-weight spanning tree with degree at most
$b+1$. Singh and Lau~\cite{LS13} (STOC'08) generalize their result for the
\textit{edge-weighted Steiner tree} (\ewdbst{}) and \textit{edge-weighted
  Steiner forest} (\ewdbsf{}) variants. They design an algorithm that finds a
Steiner forest with cost at most twice the cost of the optimal solution while
violating the degree constraints by at most three.

Despite these achievements in the offline setting, it was not known whether
degree-bounded problems are tractable in the \textit{online setting}. The
online counterparts of the aforementioned Steiner problems can be defined as
follows. The underlying graph and degree bounds are known in advance. The
demands arrive one by one in an online manner. At the arrival of a demand, we
need to augment the solution subgraph such that the new demand is satisfied.
The goal is to be competitive against an offline optimum that knows the demands
in advance.

Recently, Dehghani et al.~\cite{Dehghani2016} (SODA'16)
explore the tractability of the Online \dbsf{} problem by showing that a
natural greedy algorithm produces a solution in which the degree bounds are
violated by at most a factor of $O(\log n)$, which is asymptotically
\textit{tight}. They analyze their algorithm using a dual fitting approach
based on the combinatorial structures of the graph such as the
toughness\footnote{The toughness of a graph is defined as $\min_{X\subseteq V}
  \frac{|X|}{|\CC(G\setminus X)|}$; where for a graph $H$, $\CC(H)$ denotes the
  collection of connected components of $H$.} factor. Unfortunately, greedy
methods are not competitive for the edge-weighted variant of the problem.
Hence, it seems unlikely that the approach of \cite{Dehghani2016} can be
generalized to \ewdbsf{}.

The \textit{online edge-weighted Steiner connectivity} problems (with no bound
on the degrees) have been extensively studied in the last decades. Imase and
Waxman~\cite{IW91} (SIAM J. D. M.'91) use a dual-fitting argument to show that
the greedy algorithm has a competitive ratio of $O(\log n)$, which is also
asymptotically tight. Later the result was generalized to the EW SF variant by
Awerbuch \textit{et al.}~\cite{AAB96} (SODA'96) and Berman and
Coulston~\cite{BC97} (STOC'97). In the past few years, various primal-dual
techniques have been developed to solve the more general node-weighted
variants~\cite{AAABN09,NPS11,HLP13} (SIAM'09, FOCS'11, FOCS'13), prize-collecting
variants~\cite{QW11,HLP14} (ICALP'11,ICALP'14), and multicommodity
buy-at-bulk~\cite{chakrabarty2015online} (FOCS'15). These results are obtained by developing various primal-dual techniques~\cite{AAABN09,HLP13} while generalizing the application of combinatorial properties to the online setting~\cite{NPS11,HLP14,chakrabarty2015online}. In this paper however, we develop a primal approach for solving \textit{bounded-frequency mixed packing/covering integer programs}. We believe this framework would be proven useful in attacking other online packing and covering problems.



\subsection{Our Results and Techniques}
In this paper, we consider the online Steiner tree and Steiner forest problems at the presence of
both edge weights and degree bounds. In the Online \ewdbsf{} problem, we are
given a graph $G=(V,E)$ with $n$ vertices, edge-weight function $w$, degree
bound $b_v$ for every $v\in V$, and an online sequence of connectivity demands
$(s_i,t_i)$. Let $w_{\opt}$ denote the minimum weight subgraph which satisfies
the degree bounds and connects all demands. Let $\rho=\frac{\max_e
  w(e)}{\min_{e: w(e)>0} w(e)}$\footnote{Our competitive ratios have a 
  logarithmic dependency on $\rho$, i.e., the ratio between largest and
  smallest weight. It follows from the result of~\cite{Dehghani2016} that one
  cannot obtain polylogarithmic guarantees if this ratio is not polynomially bounded}.

\begin{theorem}\label{thm:maindbsf}
  There exists an online deterministic algorithm which finds a subgraph with
  total weight at most $O(\log^2 n) w_{\opt}$ while the degree bound of a
  vertex is violated by at most a factor of $O(\log^2(n) \log(n\rho))$.
	
	If one favors the degree bounds over total weight, one can find a subgraph
    with degree-bound violation
    $O(\log^2(n)\frac{\log(n\rho))}{\log\log(n\rho)})$ and total cost
    $O(\log^2(n)\frac{\log(n\rho))}{\log\log(n\rho)}) w_{\opt}$.
\end{theorem}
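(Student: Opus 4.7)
The plan is to derive Theorem~\ref{thm:maindbsf} from the two separable ingredients that the abstract advertises: (i) a reduction of \ewdbsf{} to an exponential-size mixed packing/covering IP whose covering frequency is $k=1$, and (ii) a generic online integral algorithm for bounded-frequency mixed packing/covering IPs with competitive ratio $O(k\log m)$. Assuming those pieces are in hand, the theorem should fall out of a cost-guessing wrapper around the IP together with an accounting of how the number of packing constraints relates to $n$ and $\rho$.

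First, I would set up the IP. Introduce a variable $x_T$ for every candidate ``piece'' $T$ of a Steiner forest -- most natural is to let $T$ range over trees (or simple paths) connecting some arrived terminal to the root, weighted by $w(T)$. For each arriving demand $i$, include a covering constraint $\sum_{T \text{ serves } i} x_T \ge 1$. For each vertex $v$, include a packing constraint $\sum_{T \ni v} x_T \le b_v$, and include one global weight-budget packing constraint $\sum_T w(T)\, x_T \le W$ where $W$ is a guess for $w_{\opt}$. Structural properties of an optimum forest must be exploited to restrict the support of the covering so that each $T$ contributes to exactly one covering constraint, enforcing $k=1$; this is the content of the ``structural reduction'' promised in the abstract, and it is where the creativity of the reduction lies.

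Next, I would feed this IP into the $O(k\log m)$-competitive online IP solver. With $k=1$ and $m=O(n)$ packing constraints (one per vertex plus one weight budget), each run violates the degree bounds and the weight budget by an $O(\log n)$ factor. To eliminate the dependence on the unknown $w_{\opt}$, I would run $O(\log(n\rho))$ parallel copies of the solver, one for each guess $W\in\{2^0,2^1,\dots\}$ up to the trivial upper bound $n^2\cdot\max_e w(e)$, and commit to the copy whose guess is the smallest power of two above $w_{\opt}$. That copy pays total weight $O(\log n)\, w_{\opt}$, but the resources spent by the other copies inflate the degree budget across the whole solution by an additional $O(\log(n\rho))$. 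A second logarithmic factor in both bounds arises from the structural reduction itself: standard techniques split the optimal forest along an $O(\log n)$-deep laminar family of cuts, losing a $\log n$ in the reduction and producing the stated $O(\log^2 n)$ weight guarantee and $O(\log^2(n)\log(n\rho))$ degree guarantee. The alternative trade-off is obtained by replacing the base-$2$ grid of guesses with a finer base-$(1+1/\log\log(n\rho))$ geometric grid, which redistributes a $\log(n\rho)/\log\log(n\rho)$ factor between the two objectives.

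The hardest step, and the principal obstacle, is achieving the frequency-$1$ reduction with only a polylogarithmic blow-up: an optimal Steiner forest heavily shares edges across demands, yet the IP forbids any piece $T$ from contributing to more than one covering constraint. Exhibiting a decomposition of an optimum forest into pieces $T_1,T_2,\dots$ each \emph{owned} by a single demand while preserving both the total weight and the per-vertex degree up to polylogarithmic factors is the heart of the proof; once this is in place, the online IP algorithm and the guessing wrapper finish the argument in a mechanical fashion.
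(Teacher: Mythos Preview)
Your plan matches the paper's approach: reduce \ewdbsf{} to a frequency-$1$ mixed packing/covering IP with $m=n+1$ packing constraints (one per vertex plus one weight budget), apply the generic $O(k\log m)=O(\log n)$ integral solver, and wrap with a guess-and-double for $w_{\opt}$. You also correctly locate the real work in the structural decomposition. Two corrections are in order.

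First, the way frequency $1$ is obtained is simpler than you suggest. The paper's IP (\ref{IP:good}) indexes variables as $x_H^i$ by \emph{both} subgraph and demand index; since $x_H^i$ appears only in the covering constraint for demand $i$, frequency $1$ is automatic and requires no structural restriction on which trees may serve which demand. The structural lemma (Lemmas~\ref{main} and~\ref{mohem}) is used only to show that the optimum of this per-demand IP is within $O(\log^2 n)$ of the shared-edge optimum \ref{IP:sf}; the proof goes via a ``connective list of subgraphs'' of the optimal forest in which every edge is reused $O(\log n)$ times, not via a laminar family of cuts.

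Second, your parallel-copies wrapper as written does not give the weight bound. If all $O(\log(n\rho))$ copies run simultaneously and you buy the union of their edges, the copies with guesses $W\gg w_{\opt}$ each spend up to $O(\log n)\cdot W$ in weight, and this does not sum to $O(\log n)\,w_{\opt}$; nor can you ``commit'' online to the right copy, since $w_{\opt}$ is unknown. The paper uses \emph{sequential} doubling (Section~\ref{sec:doubling}): run with the current guess until some packing constraint is about to exceed its $O(\log n)$ budget, then multiply the guess by $r$ and restart, discarding the bookkeeping but not the bought edges. Weights across phases then sum geometrically to $O(r\log n)\,w_{\opt}$, while degrees are additive over the $l=O(\log_r w_{\opt})$ phases. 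Taking $r=2$ gives the first trade-off; taking $r=\log w_{\opt}/\log\log w_{\opt}$ gives the second.
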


Our technical contribution for solving the \ewdbsf{} problem is twofold.
First by exploiting a structural result and massaging the optimal solution, we
show a formulation of the problem that falls in the restricted family of
\textit{bounded-frequency mixed packing/cover IPs}, while losing only
logarithmic factors in the competitive ratio. We then design a generic online 
algorithm with a logarithmic competitive ratio that can solve any instance of
the bounded-frequency packing/covering IPs. In what follows, we describe these
contributions in detail.

\paragraph{Massaging the optimal solution}
Initiated by work of Alon \textit{et al.}~\cite{AAABN09} on online set cover, Buchbinder and Naor
developed a strong framework for solving packing/covering LPs \textit{fractionally}
online. For the applications of their general framework in solving numerous
online problems, we refer the reader to the survey in \cite{BN09}. Azar
\textit{et al.}~\cite{ABFP13} generalize this method for the fractional
\textit{mixed} packing and covering LPs.
The natural linear program relaxation for \ewdbsf{}, commonly used
in the literature, is a special case of mixed packing/covering LPs: one needs
to select an edge from every cut that separates the endpoints of a demand
(covering constraints), while for a vertex we cannot choose more than a
specific number of its adjacent edges (packing constraints). Indeed, one can
use the result of Azar \textit{et al.}~\cite{ABFP13} to find an online
\textit{fractional} solution with polylogarithmic competitive ratio. 
However, doing the rounding in an online manner seems very hard.

Offline techniques for solving degree-bounded problems often fall in
the category of iterative and dependent rounding methods. Unfortunately,
these methods are inherently difficult to adapt for an online settings since
the underlying fractional solution may change dramatically in between the
rounding steps. Indeed, this might be the very reason that despite many
advances in the online network design paradigm in the past two decades, the
natural family of degree-bounded problems has remained widely open. In this
paper, we circumvent this by reducing \ewdbst{} to a novel formulation beyond
the scope of standard online packing/covering techniques and solving it using a
new online integral approach.

The crux of our IP formulation is the following structural property: Let
$(s_i,t_i)$ denote the $i^{th}$ demand. We need to augment the solution
$Q_{i-1}$ of previous steps by buying a subgraph that makes $s_i$ and $t_i$
connected. Let $G_i$ denote the graph obtained by contracting the pairs of
vertices $s_j$ and $t_j$ for every $j<i$. Note that any $(s_i-t_i)$-path in
$G_i$ corresponds to a feasible augmentation for $Q_{i-1}$. Some edges in $G_i$
might be already in $Q_{i-1}$ and therefore by using them again we can save
both on the total weight and the vertex degrees. However, in Section~\ref{sec:rightLP} we
prove that there always exists a path in $G_i$ such that even without sharing on any
of the edges in $G_i$ and therefore paying completely for the increase in the
weight and degrees, we can approximate the optimal solution up to a logarithmic
factor. This in fact, enables us to have a formulation in which the covering
constraints for different demands are \textit{disentangled}. Indeed, we 
only have one covering constraint for each demand. Unfortunately, this implies
that we have exponentially many variables, one for each possible
path in $G_i$. This may look hopeless since the competitive factors obtained by
standard fractional packing/covering methods introduced by Buchbinder and
Naor~\cite{BN09} and Azar \textit{et al.}~\cite{ABFP13}, depend on the
logarithm of the number of variables. Therefore we come up with a new approach
for solving this class of mixed packing/covering integer programs (IP).

\paragraph{Bounded-frequency mixed packing/covering IPs}
We derive our result for \ewdbst{} by demonstrating a new technique for solving mixed packing/covering \textit{integer} programs. We believe this approach could be applicable to a broader range of online problems.
The integer program~\ref{IP:pc} describes a general mixed packing/covering IP with the set of integer variables $\x\in \mathbb{Z}_{\geq 0}^n$ and $\alpha$. The packing constraints are described by a $m\times n$ non-negative matrix $P$. Similarly, the $q\times n$ matrix $C$ describes the covering constraints. The \textit{covering frequency} of a variable $x_i$ is defined as the number of covering constraints in which $x_i$ has a positive coefficient. The covering frequency of a mixed packing/covering program is defined as the maximum covering frequency of its variables.

\begin{align*}
\text{minimize}           &\quad \quad \alpha \tag{{$\mathbb{IP}1$}}\label{IP:pc}\enspace ,\\
s.t. &\quad\quad  P\x \leq \alpha\enspace .\\
&\quad\quad C\x \geq 1 \enspace .\\
&\quad\quad \x \in \mathbb{Z}_{\geq 0}, \alpha\in \mathbb{R}_{> 0}\enspace .
\end{align*}

In the online variant of mixed packing and covering IP, we are given the packing constraints in advance. However the covering constraints arrive in an online manner. At the arrival of each covering constraint, we should \textit{increase} the solution $\x$ such that it satisfies the new covering constraint. We provide a novel algorithm for solving online mixed packing/covering IPs.

\begin{theorem}\label{thm:mixedIP}
  Given an instance of the online mixed packing/covering IP, there exists a
  \textit{deterministic integral} algorithm with competitive $O(k\log
  m)$, where $m$ is the number of packing constraints and $k$ is the covering
  frequency of the IP.
\end{theorem}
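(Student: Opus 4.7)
The plan is an exponential-weights greedy algorithm with a primal potential analysis. For a doubling guess $\hat\alpha$ of the optimum I maintain a weight $w_i = \exp(\eta L_i/\hat\alpha)$ on every packing constraint $i$, where $L_i = \sum_j p_{ij} x_j$ is the current load and $\eta$ is a small constant. When the $t$-th covering constraint arrives, I repeatedly choose an index $j$ with $c_{tj} > 0$ minimizing $\sum_i w_i p_{ij}/c_{tj}$, increment $x_j$ by one, and update the weights, until $\sum_j c_{tj} x_j \ge 1$. The procedure is deterministic, integral, and inspects only the variables of the current constraint, which is essential since the IP may have exponentially many variables and the competitive ratio must not depend on that number.

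To decouple covering constraints I first reduce to the covering-frequency-$1$ case by an on-the-fly duplication: each time a covering constraint refers to a variable $x_j$, the algorithm treats it as a fresh copy $x_j^{(t)}$ specific to that constraint, inheriting the same packing coefficients. The fresh-copies instance has covering frequency $1$ and optimum at most $k$ times the original (summing the copies of any original solution produces a feasible freshcopy solution with the same packing load), while the algorithm's packing load in the original instance is exactly the freshcopy load. It therefore suffices to establish an $O(\log m)$ competitive ratio for covering-frequency-$1$ instances; composing with the factor-$k$ loss in OPT yields the claimed $O(k\log m)$.

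For the covering-frequency-$1$ analysis I track $\Phi = \sum_i w_i$. A weighted-averaging argument against any feasible $x^*$ with optimum $\alpha^*$ gives
\[
  \min_{j\,:\,c_{tj}>0} \frac{\sum_i w_i p_{ij}}{c_{tj}} \;\le\; \frac{\sum_j x_j^*\sum_i w_i p_{ij}}{\sum_j c_{tj}\, x_j^*} \;\le\; \sum_i w_i\, Q_{i,t}, \qquad Q_{i,t}=\sum_{j\in V_t} x_j^* p_{ij},
\]
where $V_t$ is the support of the $t$-th covering constraint. Because the $V_t$'s are disjoint in the frequency-$1$ setting, $\sum_t Q_{i,t} = L_i^* \le \hat\alpha$. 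Assuming $p_{ij}\le\hat\alpha$ (after rescaling) and using $e^y-1\le 2y$ on $[0,1]$, the per-step increase in $\Phi$ satisfies $\Delta\Phi \le (2\eta/\hat\alpha)\sum_i w_i p_{ij_t}$; combining this with the min-inequality above and the fact that $w_i/\Phi$ is a probability distribution over packing constraints, a telescoping argument in the style of Aspnes--Azar--Fiat--Plotkin--Waarts bounds $\ln(\Phi_{\mathrm{final}}/\Phi_0)$ by $O(\log m)$. Since $\ln\Phi \ge \eta\max_i L_i/\hat\alpha$ and the doubling schedule for $\hat\alpha$ contributes only a constant factor, this yields an $O(\log m)$ ratio in the frequency-$1$ case and $O(k\log m)$ overall.

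The main obstacle will be the amortization step. The naive per-increment bound $\Delta\ln\Phi \le 2\eta c_{tj_t}$ telescopes to $\Theta(\eta q)$ where $q$ is the number of covering constraints and may be exponential in $m$, so that bound is useless by itself. The trick is to exploit both the disjointness $\sum_t Q_{i,t}\le\hat\alpha$ and the adaptive concentration of $w_i/\Phi$ on heavily loaded packing constraints, so that $\sum_t (w_i^{(t)}/\Phi^{(t)})\,Q_{i,t}$ stays small by shifting weight away from coordinates that OPT can afford to load. Carrying out this amortized bound cleanly---so that the log-growth of $\Phi$ collapses to $O(\log m)$ rather than scaling with $q$---is the crux of the argument; once in hand, the doubling trick on $\hat\alpha$ and the variable-duplication reduction assemble the final $O(k\log m)$ guarantee.
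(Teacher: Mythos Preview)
Your duplication reduction to covering frequency $1$ is fine (the original packing load is \emph{at most}, not ``exactly'', the fresh-copy load, but that is the direction you need), and the exponential-weights greedy with doubling is the right engine. The gap is precisely where you flag it, and your proposed route through $\Delta\ln\Phi$ cannot be completed: once you divide by $\Phi$ and telescope you get $\ln\Phi_{\mathrm{fin}}\le\ln m+O(\eta q)$, and the coordinate-wise budget $\sum_t Q_{i,t}\le\hat\alpha$ has been destroyed---no ``adaptive concentration'' of $w_i/\Phi$ recovers it, because the quantities $w_i^{(t)}/\Phi^{(t)}$ depend on $t$ and cannot be pulled through $\sum_t$. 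The correct move is to bound $\Phi$ itself. From your min-ratio inequality each increment satisfies $\Delta\Phi\le(2\eta/\hat\alpha)\,c_{tj}\sum_i w_i Q_{i,t}$; now replace $w_i$ by its (larger) final value $w_i^{\mathrm{fin}}$, sum $c_{tj}$ over the increments serving constraint $t$ (at most $2$ after capping $c_{tj}\le 1$), then sum over $t$ and swap the $i$- and $t$-sums to invoke $\sum_t Q_{i,t}\le\hat\alpha$:
\[
\Phi_{\mathrm{fin}}-m\;\le\;\frac{4\eta}{\hat\alpha}\sum_i w_i^{\mathrm{fin}}\sum_t Q_{i,t}\;\le\;4\eta\,\Phi_{\mathrm{fin}},
\]
so $\Phi_{\mathrm{fin}}\le m/(1-4\eta)$ and $\max_i L_i=O(\hat\alpha\log m)$. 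This is the Aspnes--Azar--Fiat--Plotkin--Waarts argument you cite, but it only works \emph{before} dividing by $\Phi$; that is the missing idea in your outline.

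For comparison, the paper takes a different route on both counts. It does not duplicate variables; it absorbs the frequency by scaling packing contributions by $1/k$ inside the analysis, so that the cumulative OPT load $G_i(j)=\sum_{\ell\le j}\sum_{r:C_{\ell r}>0}\tfrac{1}{k}x_r^\ast P_{ir}$ stays $\le 1$. And it never bounds a growing potential: for each covering constraint it selects an entire \emph{set} $S$ minimizing the exponential increase $\sum_i(\rho^{F_i+\Delta_i(S)}-\rho^{F_i})$, then tracks the mixed potential $\Phi_j=\sum_i\rho^{F_i}(\gamma-G_i(j))$, which with $\rho=1+1/\gamma$ is shown to be \emph{nonincreasing}; $\rho^{F_i}\le\gamma m/(\gamma-1)$ then drops out in one line. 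Your route reaches the same $O(k\log m)$ once the amortization above is inserted, and has the minor advantage that the per-step subproblem is a single min-ratio selection rather than a set optimization; the paper's route trades that for a monotone potential that sidesteps the final-weights trick altogether.
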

We note that the competitive ratio of our algorithm is independent of the number of variables or the number of covering constraints. Indeed, there can be exponentially many variables.

Our result can be thought of as a generalization of the work of Aspnes et al.~\cite{aspnes1997line} (JACM'97) on virtual circuit routing. Although not explicit, their result can be massaged to solve mixed packing/covering IPs in which all the coefficients are zero or one, and the covering frequency is one. They show that such IPs admit a $O(\log(m))$-competitive algorithms. Theorem~\ref{thm:mixedIP} generalizes their result to the case with arbitrary non-negative coefficients and any bounded covering frequency.

We complement our result by proving a matching lower bound for the competitive ratio of any \textit{randomized} algorithm. This lower bound holds even if the algorithm is allowed to return fractional solutions.

\begin{theorem}\label{thm:mixedIPlower}
	Any randomized online algorithm $A$ for integral mixed packing and covering is
	$\Omega(k\log m)$-competitive, where $m$ denotes the number of packing
	constraints, and $k$ denotes the covering frequency of the IP. This even holds if $A$ is allowed to return a fractional solution.
\end{theorem}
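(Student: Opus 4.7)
I would apply Yao's minimax principle, reducing the task to exhibiting a distribution over instances on which every deterministic online algorithm, even one permitted to output fractional variables, incurs in expectation a packing violation at least $\Omega(k \log m)$ times that of the offline integer optimum. The construction decomposes multiplicatively into two parts: an $\Omega(\log m)$ lower bound for the covering-frequency-one special case, and a $k$-fold amplification that exploits the larger frequency budget.

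The $k=1$ special case is essentially online load balancing / virtual circuit routing, for which an $\Omega(\log m)$ randomized lower bound is classical and attributable to Aspnes et al.~\cite{aspnes1997line}. The adversarial pattern I have in mind uses $m$ packing constraints arranged as the leaves of a balanced binary tree of depth $L=\log m$. The adversary releases a sequence of covering constraints, one per internal node, each offering a binary choice between placing unit load on the leaves of the left versus right subtree. A hidden root-to-leaf path $\pi^\ast$ is sampled uniformly; an offline optimum using only variables aligned with $\pi^\ast$ achieves $O(1)$ packing violation, while a standard potential argument shows that any deterministic algorithm, having to commit to directions before learning $\pi^\ast$, accrues expected load $\Omega(\log m)$ on the terminal leaf of $\pi^\ast$.

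To amplify by a factor of $k$, I would stack $k$ copies of the above instance on a common set of $m$ packing constraints, linking them through shared variables. For every per-copy covering variable $x$ I would introduce a single ``shared'' variable $\hat x$, with packing coefficients identical to $x$, that simultaneously appears in the $k$ parallel copies of $x$'s covering constraint and in no others; hence $\hat x$ has covering frequency exactly $k$. The offline optimum uses only shared variables along $\pi^\ast$, discharging all $k$ copies of each constraint with one unit of $\hat x$, and thereby maintains $O(1)$ congestion. The adversary, however, interleaves the $k$ copies of the covering constraints across the tree levels in a random order, so that when a given copy arrives the online algorithm cannot tell which earlier copies will later share a $\hat x$ with it. An exchange argument converts premature use of $\hat x$ by the online player into an equivalent strategy using per-copy variables without decreasing its cost, reducing the analysis to $k$ independent instances of the $k=1$ lower bound and yielding expected congestion $\Omega(k \log m)$.

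The principal technical obstacle is executing this decoupling cleanly against \emph{fractional randomized} play: one must show that committing mass to $\hat x$ before all $k$ copies have arrived is wasted in expectation, because many groupings remain consistent with the history. I would handle this by a symmetrization/coupling argument that leverages linearity of the packing constraints together with the exchangeability of the adversary's distribution across copies, so that Yao's bound for the deterministic integral setting lifts to the fractional randomized setting at the loss of only a constant factor. Combining the two layers then establishes the claimed $\Omega(k \log m)$ lower bound.
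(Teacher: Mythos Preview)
Your tree construction for the $\Omega(\log m)$ factor is essentially the one the paper uses, and Yao's principle is the right framing. The gap is in the amplification to $\Omega(k\log m)$.

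In your stacking construction, every per-copy variable $x$ has a companion shared variable $\hat{x}$ with \emph{identical} packing coefficients that satisfies all $k$ copies of $x$'s covering constraint. But then the online algorithm can simply ignore the per-copy variables and play only on the $\hat{x}$'s. Since each $\hat{x}$ costs the same as $x$ in every packing constraint and covers weakly more, this is never worse for the online player. Once the online algorithm commits to shared variables only, the $k$ copies collapse: setting $\hat{x}=1$ when the first copy of its constraint arrives satisfies all remaining copies for free, and the algorithm is back to solving a single frequency-one instance with violation $O(\log m)$. Random interleaving of the copies does nothing here, because the constraints satisfied by $\hat{x}$ are fixed in advance, not revealed adaptively. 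Your proposed exchange argument (``convert premature use of $\hat{x}$ into per-copy variables without decreasing its cost'') therefore points the wrong way: shared variables help the online player at least as much as they help the offline optimum, so the gap does not widen.

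The paper obtains the $k$ factor by a different mechanism that does not rely on stacking. At each internal node of the binary tree it places $d$ variables on each child edge and issues $\log_2 d$ covering constraints via a \emph{halving} process: round $i$ requires unit mass on the currently active $d/2^i$ variables from each side, after which a random half of each active set is discarded. Any online algorithm must put total mass $\ge 1$ on the active set each round, and in expectation half of that mass lands on variables that are immediately deactivated and never appear in a later constraint; summing over $\log_2 d$ rounds forces expected mass $\Omega(\log d)$ per node. Each variable survives at most $\log_2 d$ rounds, so its covering frequency is exactly $k=\log_2 d$, while the offline optimum sets a single surviving variable per node on the child edge \emph{off} the random path, achieving $\alpha=1$. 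Combining with the $\log m$ levels of the tree yields $\Omega(\log m\cdot\log d)=\Omega(k\log m)$. The point is that here the ``sharing'' across the $k$ constraints at a node is deliberately misaligned with what the online player has already bought, which is what your construction lacks.
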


As mentioned before, Azar \textit{et al.}~\cite{ABFP13} provide a fractional algorithm for mixed packing/covering LPs with competitive ratio of $O(\log m \log d)$ where $d$ is the maximum number of variables in a single constraint.
They show an almost matching lower bound for deterministic algorithms.
We distinguish two advantages of our approach compared to that of Azar \textit{et al}:

\begin{itemize}\itemsep=-0.05cm
	\item The algorithm in \cite{ABFP13} outputs a \textit{fractional} competitive solution which then needs to be rounded online. For various problems such as Steiner connectivity problems, rounding a solution online is very challenging, even if offline rounding techniques are known. Moreover, the situation becomes hopeless if the integrality gap is unbounded. However, for bounded-frequency IPs, our algorithm directly produces an integral competitive solution even if the integrality gap is large.
	\item Azar \textit{et al.} find the best competitive ratio with respect to the number of packing constraints and the size of constraints. Although these parameters are shown to be bounded in several problems, in many problems such as connectivity problems and flow problems, formulations with exponentially many variables are very natural. Our techniques provide an alternative solution with a tight competitive ratio, for formulations with bounded covering frequency.
\end{itemize}


\subsection{Preliminaries}
Let $G=(V, E)$ be an undirected graph of size $n$ ($|V|=n$). Let $w:E
\rightarrow \mathbb{Z}_{>0}$ be a function denoting the edge weights. For a
subgraph $H \subseteq G$, we define $w(H):=\sum_{e \in E(H)}w(e)$. For every
vertex $v \in V$, let $b_v \in \mathbb{Z}_{>0}$ denote the degree bound of $v$.
Let $\deg_H(v)$ denote the degree of vertex $v$ in subgraph $H$. We define the
load $l_H(v)$ of vertex $v$ w.r.t. $H$ as $\deg_H(v)/b_v$. In \dbsf{} we are given graph
$G$, degree bounds, and $k$ connectivity demands. Let $\sigma_i$ denote the
$i$-th demand. The $i$-th demand is a pair of vertices $\sigma_i=(s_i, t_i)$,
where $s_i, t_i \in V$. In \dbsf{} the goal is to find a subgraph $H \subseteq
G$ such that for each demand $\sigma_i$, $s_i$ is connected to $t_i$ in $H$,
for every vertex $v \in V$, $l_H(v) \leq 1$, and $w(H)$ is minimized. In this
paper without loss of generality we assume the demand endpoints are distinct
vertices with degree one in $G$ and degree bound infinity.

In the online variant of the problem, we are given graph $G$ and degree bounds in
advance. However the sequence of demands are given one by one. At arrival of
demand $\sigma_i$, we are asked to provide a subgraph $H_i$, such that $H_{i-1}
\subseteq H_i$ and $s_i$ is connected to $t_i$ in $H_i$.

The following integer program is a natural mixed packing and covering integer program for \oewdbsf{}. Let $\Mc{S}$ denote the collection of subsets of vertices that separate the endpoints of at least one demand. For a set of vertices $S$, let $\delta(S)$ denote the set of edges with exactly one endpoint in $S$. In \ref{IP:sf}, for an edge $e$, $x_e=1$ indicates that we include $e$ in the solution while $x_e=0$ indicates otherwise. The variable $\alpha$ indicates an upper bound on the violation of the load of every vertex and an upper bound on the violation of the weight. The first set of constraints ensures that the load of a vertex is upper bounded by $\alpha$. The second constraint ensures that the violation for the weight is upper bounded by $\alpha$. The third set of constraints ensures that the endpoints of every demand are connected. 
	\begin{align}
	\text{minimize}           &\quad \alpha\enspace. \tag{{$\mathbb{SF\_IP}$}}\label{IP:sf}\\
	\forall v\in V            &\quad \frac{1}{b_v}\sum_{e \in \delta(\smash{\{v\}})}{x_e} \leq \alpha \enspace.\label{IP:sfc1}\\
	&\quad \frac{1}{w_{\text{opt}}}\sum_{e\in E}{w(e)x_e} \leq \alpha \enspace.\label{IP:sfc2}\\
	\forall S\subseteq \Mc{S} &\quad  \sum_{e \in \delta\smash{(S)}}{x_e} \geq 1\enspace.\label{IP:sfc3}\\
	&\quad x_e \in \{0, 1\}, \alpha\in \mathbb{Z}_{> 0}\enspace.\nonumber
	\end{align}

\subsection{Overview of the Paper}
We begin Section \ref{sec:rightLP} by providing an online bounded frequency mixed packing/covering IP for \ewdbsf{}. Further we prove that this formulation has plausible structures. In Section \ref{sec:OMPC} we provide an online deterministic algorithm for online bounded frequency mixed packing/covering IPs. In Section \ref{sec:everything} we merge the IP formulation in Section \ref{sec:rightLP} and the techniques in Section \ref{sec:OMPC} to obtain online polylogarithmic-competitive algorithms for \ewdbsf{}. Finally in Section \ref{sec:lower} we complement our algorithm for online bounded frequency mixed packing/covering IPs by providing a matching lower bound for the competitive ratio of any randomized algorithm.
\section{Finding the Right Integer Program} \label{sec:rightLP}
	In this section we design an online mixed packing and covering integer program for\\ \oewdbsf. We show this formulation is near optimal, i.e. any $f-$approximation for this formulation, implies an $O(f\log^2 n)$-approximation for \oewdbsf{}. In Section \ref{sec:everything} we show there exists an online algorithm that finds an $O(\log^3 n)$-approximation of $w_{\text{opt}}$ and violates degree bounds by $O(\log^3 n \log w_{\text{opt}})$, where $w_{\text{opt}}$ denotes the optimal weight.
	
	First we define some notations. 
	For a sequence of demands $\sigma=\langle (s_1, t_1), \ldots, (s_k, t_k) \rangle$, we define $R_\sigma(i)$ to be a set of $i$ edges, connecting the endpoints of the first $i$ demands. In particular $R_\sigma(i):=\bigcup_{j=1}^i{e(s_j, t_j)}$, where $e(s_j, t_j)$ denotes a direct edge from $s_j$ to $t_j$. Moreover, we say subgraph $H_i$ satisfies the connectivity of demand $\sigma_i=(s_i, t_i)$, if $s_i$ and $t_i$ are connected in graph $H_i \cup R_\sigma(i-1)$. Let $\mathcal{H}_i$ denote the set of all subgraphs that satisfy the connectivity of demand $\sigma_i$. In \ref{IP:good} variable $\alpha$ denotes the violation in the packing constraints. Furthermore for every subgraph $H \subseteq G$ and demand $\sigma_i$, there exists a variable $x_H^i \in \{0, 1\}$. $x_H^i=1$ indicates we add the edges of $H$ to the existing solution, at arrival of demand $\sigma_i$. The first set of constraints ensure the degree-bounds are not violated more than $\alpha$. The second constraint ensures the weight is not violated by more than $\alpha$. The third set of constraints ensure the endpoints of every demand are connected.

\begin{align}
\text{minimize}           &\quad \alpha\enspace . \tag{{$\mathbb{PC\_IP}$}}\label{IP:good}\\
\forall v \in V &\quad\quad  \frac{1}{b_v}\sum_{i=1}^k\sum_{H \subseteq G}\deg_H(v)x_H^i \leq \alpha\label{cons:p1}\enspace .\\
&\quad\quad \frac{1}{w_{\text{opt}}}\sum_{i=1}^k\sum_{H \subseteq G}w(H)x_H^i \leq \alpha\label{cons:p2}\enspace .\\
\forall \sigma_i  &\quad\quad \sum_{H \in \mathcal{H}_i} {x_H^i} \geq 1\label{cons:c1}\enspace .\\\nonumber
\forall H \subseteq G, 1\leq i\leq k &\quad\quad \x_H^i \in \{0, 1\}\enspace .\\\nonumber
&\quad\quad \alpha > 0\enspace .
\end{align}

We are considering the online variant of the mixed packing and covering program. We are given the packing Constraints \eqref{cons:p1} and \eqref{cons:p2} in advance. At arrival of demand $\sigma_i$, the corresponding covering Constraint \eqref{cons:c1} is added to the program. We are looking for an online solution which is feasible at every online stage. Moreover the variables $x_H$ should be monotonic, i.e. once an algorithm sets $x_H=1$ for some $H$, the value of $x_H$ is 1 during the rest of the algorithm. Figure \eqref{fig:SSF} illustrates an example which indicates the difference between the solutions of \ref*{IP:good} and \ref{IP:sf}.
	\begin{figure} [!h]
		\centering
		\includegraphics[width=0.32\textwidth]{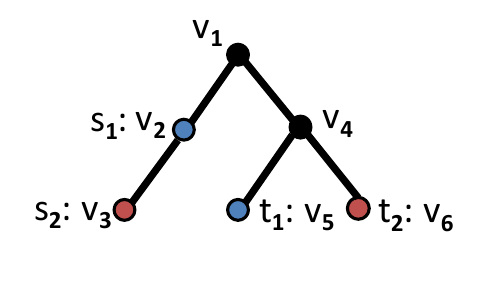}
		\caption{An example where every vertex has degree-bound 3 and every edge has weight 1. The first demand is $(v_2, v_5)$ and the second demand is $(v_3, v_6)$. The optimal solution for \ref{IP:sf} is a subgraph, say $H$, with the set of all edges and vertices, i.e. $H=G$. However an optimal solution for \ref{IP:good} is: Two subgraphs $H_1$ for the first request which has edges $\{e(v_1, v_2), e(v_1, v_4), e(v_4, v_5) \}$ and $H_2$ for the second request which has edges $\{e(v_2,  v_3), e(v_4, v_5), e(v_4, v_6)\}$. Note that $w(H)=5$ and $w(H_1)+w(H_2)=6$, since we have edge $e(v_4, v_5)$ in both $H_1$ and $H_2$. Moreover the number of edges incident with $v_4$ in the solution of \ref{IP:good} is 4, i.e. $\deg_{H_1}(v_4)+\deg_{H_2}(v_4)=4$.}
		\label{fig:SSF}
	\end{figure}

Let \pcopt{} and \lpopt{} denote the optimal solutions for \ref{IP:good} and \ref{IP:sf}, respectively. Lemma \ref{lm:sf1} shows given an online solution for \ref{IP:good} we can provide a feasible online solution for \ref{IP:sf} of cost \pcopt{}.

\begin{lemma}\label{lm:sf1}
	Given a feasible solution \{$\x$, $\alpha$\} for \ref{IP:good}, there exists a feasible solution \{$\x'$, $\alpha$\} for \ref{IP:sf}.
\end{lemma}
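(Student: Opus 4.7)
The plan is to construct $\x'$ directly from $\x$ by \emph{flattening}: set $x'_e = 1$ if $e$ appears in some $H$ with $x_H^i = 1$ for at least one demand $i$, and $x'_e = 0$ otherwise. Since the $x_H^i$ are integer, this is well defined and lies in $\{0,1\}$. The same value of $\alpha$ should carry over, and the three blocks of constraints in \ref{IP:sf} will be verified one at a time.

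The packing constraints are essentially immediate from a double-counting estimate. For each vertex $v$,
\[
\sum_{e \in \delta(\{v\})} x'_e \;\le\; \sum_{e \in \delta(\{v\})} \sum_{i}\sum_{H : e \in H} x_H^i \;=\; \sum_{i=1}^{k}\sum_{H \subseteq G} \deg_H(v)\, x_H^i,
\]
so dividing by $b_v$ and invoking \eqref{cons:p1} yields the load bound. A parallel inequality with $w(e)$ in place of the indicator of $\delta(\{v\})$, followed by \eqref{cons:p2}, gives the weight bound.

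The delicate piece is the covering constraint \eqref{IP:sfc3}: for every $S \in \mathcal{S}$, we must exhibit an edge of $\delta(S)$ with $x'_e = 1$. Let $H^\star$ be the subgraph whose edge set is $\{e : x'_e = 1\}$. I will show by induction on $i$ that $H^\star$ contains an $s_i$--$t_i$ path for every demand $\sigma_i$, using only the $H$'s with $x_H^j = 1$ for $j \le i$. For $i=1$, Constraint \eqref{cons:c1} forces some $H$ with $x_H^1 = 1$ and $H \in \mathcal{H}_1$, and since $R_\sigma(0)=\emptyset$ such an $H$ connects $s_1,t_1$ using only real edges, all of which are in $H^\star$. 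For the inductive step, pick $H$ with $x_H^i = 1$ and $H \in \mathcal{H}_i$; then $s_i,t_i$ are connected in $H \cup R_\sigma(i-1)$. Walk along such a connecting trail and, whenever it traverses an artificial edge $e(s_j,t_j)$ with $j<i$, splice in an $s_j$--$t_j$ path provided by the inductive hypothesis. The result is an $s_i$--$t_i$ walk entirely inside $H^\star$, proving the claim.

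Given any $S \in \mathcal{S}$, by definition $S$ separates the endpoints of some demand $\sigma_i$, and the $s_i$--$t_i$ path in $H^\star$ just constructed must cross $\delta(S)$; the corresponding edge satisfies $x'_e = 1$, so \eqref{IP:sfc3} holds. The main obstacle I anticipate is the bookkeeping in the inductive splicing step: one has to be careful that the replacement paths coming from earlier demands live in $H^\star$ (not merely in $H \cup R_\sigma(j-1)$), which is exactly why the induction is stated over all of $H^\star$ rather than over a demand-specific subgraph. Once that is set up cleanly, everything reduces to elementary inequalities and path concatenation, completing the proof.
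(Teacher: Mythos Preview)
Your argument is correct and is exactly the natural ``flattening'' proof the statement calls for: define $x'_e=1$ iff $e$ lies in some $H$ with $x_H^i=1$, bound the packing constraints by the obvious double-counting $x'_e\le\sum_i\sum_{H\ni e}x_H^i$, and verify connectivity of every demand pair in $H^\star$ by induction, splicing in earlier $s_j$--$t_j$ paths whenever an artificial edge $e(s_j,t_j)$ is used. All three checks go through as you wrote them.

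It is worth noting, however, that what the paper places in the appendix under the heading ``Proof of Lemma~\ref{lm:sf1}'' is \emph{not} a proof of this direction at all: that argument starts from an optimal solution of \ref{IP:sf}, invokes the connective-list Lemma~\ref{main}, and builds a feasible solution of \ref{IP:good} with $\alpha'=(3\log^2 n)\alpha$. That is precisely the content of Lemma~\ref{mohem} (the hard, converse direction showing $\pcopt\le O(\log^2 n)\,\lpopt$), and its appearance under the \ref{lm:sf1} label is evidently a mislabeling in the paper. Your proof is the appropriate one for the lemma as stated; the paper simply does not supply a separate written proof of this (easy) direction.
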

 
In the rest of this section, we show that we do not lose much by changing \ref{IP:sf} to \ref{IP:good}.\\
In particular Lemma \ref{mohem} shows $\pcopt \leq O(\log^2 n)\lpopt$. 
To this end, we first define the \textit{connective} list of subgraphs for a graph $G$, a forest $F$, and a list of demands $\sigma$. We then prove an existential lemma for such a list of subgraphs with a desirable property for any $\langle G, F, \sigma\rangle$. With that in hand, we prove $\pcopt \leq O(\log n)\lpopt$ in Lemma \ref{mohem}. 

Given a graph $G$, a list of demands $\sigma = \langle(s_1, t_1), (s_2, t_2) \ldots, (s_k, t_k)\rangle$, and a forest of $G$, $F$ we define a \textit{connective} list of subgraphs for $\langle G,F,\sigma\rangle$ in the following way:
\begin{definition}
Let $Q = \langle Q_1, Q_2, Q_3, \ldots,Q_k\rangle$ be a list of $k$ subgraphs of $F$. We say $Q$ is a connective list of subgraphs for $\langle G, F, \sigma\rangle$ iff for every  $1 \leq i \leq k$ there exists no cut disjoint from $Q_i$ that separates $s_i$ from $t_i$, but does not separate any $s_j$ from $t_j$ for $j < i$.
\end{definition}
The intuition behind the definition of connective subgraphs is the following: If $Q$ is a connective list of subgraphs for an instance $\langle G, F, \sigma\rangle$ then for every  $i$ we are guaranteed that the union of all subgraphs $\cup_{j=1}^i Q_i$ connects $s_i$ to $t_i$. In Lemma \ref{main} we show for every $\langle G, F, \sigma\rangle$, there exists a connective list of subgraphs for $\langle G, F, \sigma\rangle$, such that each edge of $F$ appears in at most $O(\log n)$ subgraphs of $Q$.
\begin{lemma}\label{main}
Let $G$ be a graph and $F$ be a forest in $G$. If $\sigma$ is a collection of $k$ demands $\langle(s_1, t_1), (s_2, t_2) \ldots, (s_k, t_k)\rangle$, then there exists a connective list of subgraphs $Q = \langle Q_1, Q_2, \ldots, Q_k\rangle$ for $\langle G,F,\sigma\rangle$ such that every edge of $F$ appears in at most $3\log |V(F)|$ $Q_i$'s.
\end{lemma}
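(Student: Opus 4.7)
The plan is to prove the lemma by strong induction on $n := |V(F)|$, first reducing to the case where $F$ is a single tree $T$. Different trees of $F$ can be processed independently: $F$-edges of one tree cannot help connect vertices of another, so cross-tree demands are satisfied only via earlier virtual edges (making $Q_i = \emptyset$ feasible when it is feasible at all), and the per-tree bound of $3\log|V(T)| \le 3\log|V(F)|$ suffices. The base case $|V(T)| \le 2$ is immediate.

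For the inductive step, I pick a centroid $c$ of $T$ so that each subtree $T^j$ of $T-c$ has $|V(T^j)| \le n/2$. Each demand $(s_i,t_i)$ is classified as \emph{internal} (both endpoints in some $V(T^j)\cup\{c\}$) or \emph{external} (endpoints in two distinct subtrees, so the $T$-path of the demand passes through $c$). For each subtree $T^j$ I build an auxiliary demand sequence $\sigma^j$ by interleaving, in original order, the internal demands of $T^j\cup\{c\}$ with induced sub-demands representing the $T^j$-portions of the external demands. Applying the inductive hypothesis to $(T^j\cup\{c\},\sigma^j)$ yields a connective list for each sub-instance, with every edge of $T^j\cup\{c\}$ used at most $3\log(n/2+1)$ times.

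The $Q_i$'s of the original problem are then assembled from these recursive outputs: an internal demand takes $Q_i$ directly from the recursive call, while an external demand $(s_i,t_i)$ with $s_i\in T^a,\ t_i\in T^b$ takes $Q_i$ as the union of the two recursive outputs for its induced sub-demands in $T^a\cup\{c\}$ and $T^b\cup\{c\}$ (plus, for the very first external demand, the edges incident to $c$ that actually bridge the two subtrees). The structural observation driving the bound is that once the first external demand establishes a virtual edge crossing $c$, subsequent external demands can reuse it (together with other externally-induced virtual edges) to avoid fresh traversal of $c$'s edges; the contribution of external demands at this level is thereby reduced to sub-demands strictly inside the smaller augmented subtrees.

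The main obstacle, and where the constant $3$ emerges, is translating the ``no earlier pair is separated'' condition between the original and recursive instances: a cut in $G$ that separates $s_i$ from $t_i$ without cutting any prior pair must be shown to restrict to a cut in $T^j\cup\{c\}$ that does not separate any prior pair of $\sigma^j$, which requires a careful case analysis on which side of the cut $c$ lies on and on how the external pairs cross the cut. Once this translation is established, each $F$-edge lies in exactly one $T^j\cup\{c\}$ and is therefore used at most $3\log(n/2+1)\le 3\log n-2$ times by the recursion, leaving slack of at most $2$ extra uses per level of centroid decomposition to account for the single ``first external demand'' that traverses $c$'s edges directly at the current level. This recurrence resolves to the desired $3\log|V(F)|$ bound.
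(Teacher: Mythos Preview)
Your centroid-decomposition scheme has a concrete gap in how external demands are turned into sub-demands. When an external demand $(s_j,t_j)$ with $s_j\in T^a$, $t_j\in T^b$ is replaced by $(s_j,c)$ in $T^a\cup\{c\}$ and $(c,t_j)$ in $T^b\cup\{c\}$, the recursive instance on $T^a\cup\{c\}$ acquires a virtual edge $(s_j,c)$ that has \emph{no counterpart} in the original instance, where the only virtual edge produced by demand $j$ is $(s_j,t_j)$. A later external demand whose recursive output relies on this phantom shortcut to $c$ can fail connectivity once lifted back to $G$.

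A minimal witness: let $T$ have vertices $\{a_1,a_2,c,b,d\}$ and edges $a_1a_2,\ a_2c,\ cb,\ cd$; take $\sigma=\langle(a_1,b),(a_2,d)\rangle$. Both demands are external at the centroid $c$. In $T'^a=\{a_1,a_2,c\}$ the sub-demand sequence is $\langle(a_1,c),(a_2,c)\rangle$, and the inductive hypothesis (used as a black box) is free to return $Q_2^a=\{a_1a_2\}$, since with the sub-problem's virtual edge $(a_1,c)$ this indeed connects $a_2$ to $c$. In $T'^d$ it returns $Q_2^d=\{cd\}$. Your assembly gives $Q_2=\{a_1a_2,\ cd\}$. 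But in the original problem the only available virtual edge after step~1 is $(a_1,b)$, and $Q_2\cup\{(a_1,b)\}$ has components $\{a_1,a_2,b\}$ and $\{c,d\}$; hence $a_2$ and $d$ are disconnected, and $Q$ is not a connective list. The intuition that ``the first external virtual edge lets later externals cross $c$'' fails because that edge lands in $T^b$, not at $c$, and nothing in $Q_2$ bridges $T^b$ back to $c$. Any repair seems to require either adding to $Q_i$ pieces from subtrees containing neither $s_i$ nor $t_i$ (which breaks your per-edge recurrence) or a substantially stronger inductive hypothesis that you have not stated.

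The paper takes a different route. Rather than a direct recursive construction, it passes through a fractional argument: one first establishes a weighted guarantee (for every edge-weighting $w$ there is a connective list of total weight $O(\log|V(F)|)\cdot w(F)$), then invokes a separating-hyperplane/min--max lemma (Appendix~\ref{sec:redavglemma}) to extract a \emph{fractional} connective list with per-edge load $O(\log|V(F)|)$, and finally applies a randomized rounding lemma (Appendix~\ref{sec:rounding}) to obtain an integral list meeting the stated bound. This sidesteps entirely the virtual-edge bookkeeping that trips up your recursion.
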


Finally, we leverage Lemma \ref{main} to show $\pcopt \leq O(\log n)\lpopt$.
\begin{lemma}\label{mohem}
$\pcopt \leq O(\log n)\lpopt$.
\end{lemma}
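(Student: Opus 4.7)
\textbf{Proof plan for Lemma \ref{mohem}.}
The plan is to turn the optimal solution of \ref{IP:sf} into a feasible solution of \ref{IP:good} by applying Lemma \ref{main} as a ``decomposition with low multiplicity''. Let $(\x^*,\alpha^*)$ with $\alpha^*=\lpopt$ be an optimal solution of \ref{IP:sf}, and let $F^*$ be the subgraph it encodes. Without loss of generality $F^*$ may be taken to be a forest: since the edge weights are positive and \ref{IP:sf} only requires connectivity of demand pairs, any cycle edge can be deleted without violating the covering constraints, and deletion can only decrease the left-hand sides of the packing constraints \eqref{IP:sfc1}--\eqref{IP:sfc2}. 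In particular, $\deg_{F^*}(v)/b_v\le \alpha^*$ for every $v$, and $w(F^*)/w_{\text{opt}}\le\alpha^*$.

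Apply Lemma \ref{main} to $\langle G, F^*, \sigma\rangle$ to obtain a connective list $Q_1,\dots,Q_k$ of subgraphs of $F^*$ in which every edge of $F^*$ appears in at most $3\log n$ of the $Q_i$. Define a candidate solution of \ref{IP:good} by setting $x^i_{Q_i}=1$ for each $i$ and $x^i_H=0$ for every other $H$. The proposed value of $\alpha$ is $3\log n\cdot\alpha^*$.

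The main step is verifying the covering constraint \eqref{cons:c1}, i.e.\ that $Q_i\in \Mc H_i$: one must show that $s_i$ is connected to $t_i$ in $Q_i\cup R_\sigma(i-1)$. This is the subtlest piece because it is where the ``connective'' property must be translated from a cut-statement into a connectivity statement in the contracted graph. I would argue by contradiction: if $s_i$ and $t_i$ lie in different components of $Q_i\cup R_\sigma(i-1)$, then there is a vertex partition $(S,V\setminus S)$ with $s_i\in S$, $t_i\notin S$, no edge of $Q_i$ crossing, and no direct edge $e(s_j,t_j)$ with $j<i$ crossing. The last condition says that $(S,V\setminus S)$ separates no previous demand pair. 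Such a cut is disjoint from $Q_i$, separates $s_i$ from $t_i$, and does not separate any earlier $(s_j,t_j)$, directly contradicting the defining property of a connective list. Hence $Q_i\in\Mc H_i$, and \eqref{cons:c1} holds.

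The packing constraints follow by a counting argument from the multiplicity bound of Lemma \ref{main}. For every vertex $v$,
\begin{equation*}
\sum_{i=1}^{k}\deg_{Q_i}(v)\;=\;\sum_{e\in F^*,\,v\in e}\bigl|\{i:e\in Q_i\}\bigr|\;\le\;3\log n\cdot\deg_{F^*}(v)\;\le\;3\log n\cdot b_v\,\alpha^*,
\end{equation*}
which gives \eqref{cons:p1} with $\alpha=3\log n\cdot\alpha^*$. An identical computation with edge weights in place of incidences yields $\sum_i w(Q_i)\le 3\log n\cdot w(F^*)\le 3\log n\cdot w_{\text{opt}}\,\alpha^*$, establishing \eqref{cons:p2}. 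Thus $(\x,3\log n\cdot\alpha^*)$ is feasible for \ref{IP:good}, and $\pcopt\le 3\log n\cdot\lpopt=O(\log n)\lpopt$. The only real obstacle is the cut-to-connectivity translation in the covering step; everything else is a direct application of the multiplicity guarantee of Lemma \ref{main}.
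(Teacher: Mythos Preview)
Your proposal is correct and follows essentially the same route as the paper: take an optimal \ref{IP:sf} solution, prune to a forest, invoke Lemma~\ref{main} to get a connective list with edge multiplicity $O(\log n)$, set $x^i_{Q_i}=1$, and bound the packing constraints by the multiplicity factor. The paper simply asserts $Q_i\in\mathcal{H}_i$ from the connective property, whereas you spell out the cut-to-connectivity contradiction; your extra care there is fine and the argument is sound.
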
 


Finally, we leverage Lemma \ref{main} to show $\pcopt \leq O(\log^2 n)\lpopt$.
\begin{lemma}\label{mohem}
	$\pcopt \leq O(\log^2 n)\lpopt$.
\end{lemma}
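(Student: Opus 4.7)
The plan is to build a feasible solution for \ref{IP:good} directly from an optimal solution of \ref{IP:sf}, with Lemma~\ref{main} as the key combinatorial engine. Let $\{x^*,\alpha^*\}$ with $\alpha^*=\lpopt$ be an optimal solution of \ref{IP:sf}, and let $H^*$ be the corresponding subgraph. Since all edge weights are positive, removing any edge of a cycle of $H^*$ preserves every covering constraint while strictly decreasing $w(H^*)$ and every incident degree, so without loss of generality $H^*$ is a forest $F$. By feasibility of $\alpha^*$ we have $\deg_F(v)\leq \lpopt\cdot b_v$ for every $v\in V$, $w(F)\leq \lpopt\cdot w_{\opt}$, and each demand pair $(s_i,t_i)$ lies in a common component of $F$.

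Next I would apply Lemma~\ref{main} to the triple $\langle G,F,\sigma\rangle$ to obtain a connective list of subgraphs $Q_1,\dots,Q_k\subseteq F$ such that every edge of $F$ appears in at most $3\log |V(F)|\leq 3\log n$ of the $Q_i$'s. I would then construct a candidate solution to \ref{IP:good} by setting $x_{Q_i}^{\,i}=1$ for each $i$ and all other variables to zero, and verify that it is feasible with small $\alpha$.

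Feasibility of the covering constraint~\eqref{cons:c1} is immediate from the definition of a connective list: no cut of $G$ disjoint from $Q_i$ separates $s_i$ from $t_i$ without also separating some earlier pair $(s_j,t_j)$, which is exactly the assertion that $s_i$ and $t_i$ lie in the same component of $Q_i\cup R_\sigma(i-1)$, i.e.\ $Q_i\in\mathcal{H}_i$. For the two packing constraints, the edge-sharing bound from Lemma~\ref{main} gives, at every vertex $v$,
\[
\frac{1}{b_v}\sum_{i=1}^k \deg_{Q_i}(v)\;\leq\;\frac{3\log n}{b_v}\,\deg_F(v)\;\leq\; 3\log n\cdot\lpopt,
\]
and analogously $\frac{1}{w_{\opt}}\sum_i w(Q_i)\leq \frac{3\log n}{w_{\opt}}w(F)\leq 3\log n\cdot\lpopt$. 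Hence the constructed assignment is feasible with $\alpha\leq 3\log n\cdot\lpopt$, which is more than enough to conclude $\pcopt\leq O(\log^2 n)\lpopt$.

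The main obstacle I anticipate is the bridge between the purely combinatorial cut condition used to define a connective list and the $\mathcal{H}_i$-membership condition used in \ref{IP:good}: one has to confirm that a cut in $G$ separating $s_i$ from $t_i$ but no earlier pair corresponds exactly to a cut in the graph obtained by contracting the earlier pairs, equivalently to adding the shortcut edges $R_\sigma(i-1)$. Once this identification is in place, the forest assumption together with Lemma~\ref{main} does all the heavy lifting, and the sharing factor $3\log n$ transfers verbatim into the $\alpha$ bound for both the degree and weight packing constraints.
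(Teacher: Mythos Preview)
Your proposal is correct and follows essentially the same route as the paper: take an optimal \ref{IP:sf} solution, reduce to a forest $F$, invoke Lemma~\ref{main} to obtain the connective list $Q_1,\dots,Q_k$, set $x_{Q_i}^i=1$, and transfer the edge-multiplicity bound into the degree and weight packing constraints. Your verification that $Q_i\in\mathcal{H}_i$ via the cut correspondence is exactly the point the paper glosses over, and your observation that the argument in fact yields $O(\log n)$ (stronger than the stated $O(\log^2 n)$) is consistent with Lemma~\ref{main} as written; the paper itself is inconsistent here, using $3\log^2 n$ in its appendix proof while Lemma~\ref{main} promises $3\log|V(F)|$.
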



This shows we can use \ref{IP:good} as an online mixed packing/covering IP to obtain an online solution for \oewdbsf{} losing a factor of $O(\log^2 n)$.\\
In Section \ref{sec:everything} we show this formulation is an online bounded frequency mixed packing/covering IP, thus we leverage our technique for such IPs to obtain a polylogarithmic-competitive algorithm for \oewdbsf{}.
\section{Online Bounded Frequency Mixed Packing/Covering IPs}\label{sec:OMPC}
	\newcommand{\sss}{\mathcal{S}}
	In this section we consider bounded frequency online mixed packing and covering integer programs. For every online mixed packing and covering IP with covering frequency $k$, we provide an online algorithm that violates each packing constraint by at most a factor of $O(k\log m)$, where $m$ is the number of packing constraints. We note that this bound is independent of the number of variables, the number of covering constraints, and the coefficients of the mixed packing and covering program. Moreover the algorithm is for integer programs, which implies obtaining an integer solution does not rely on (online) rounding.
	
	In particular we prove there exists an online $O(k\log m)$-competitive algorithm for any mixed packing and covering IP such that
every variable has covering frequency at most $k$,
	where the covering frequency of a variable $x_r$ is the number of covering constraints with a non-zero coefficient for $x_r$.
	
	 We assume that all variables are binary. One can see this is without loss of generality as long as we know every variable $x_r \in \{1, 2, 3, \ldots, 2^l\}$. Since we can replace $x_r$ by $l$ variables $y_r^1, \ldots, y_r^l$ denoting the digits of $x_r$ and adjust coefficients accordingly. Furthermore, for now we assume that the optimal solution for the given mixed packing and covering program is 1. In Theorem \ref{thm:radife} we prove that we can use a doubling technique to provide an $O(k\log m)$-competitive solution for online bounded frequency mixed packing and covering programs with any optimal solution. The algorithm is as follows. We maintain a family of subsets $\mathcal{S}$. Initially $\sss=\emptyset$. Let $\sss(j)$ denote $\sss$ at arrival of $C_{j+1}$. For each covering constraint $C_{j+1}$, we find a subset of variables $S_{j+1}$ and add $S_{j+1}$ to $\sss$. We find $S_{j+1}$ in the following way. For each set of variables $S$, we define a cost function $\tau_S(\sss(j))$ according to our current $\sss$ at arrival of $C_{j+1}$. We find a set $S_{j+1}$ that satisfies $C_{j+1}$ and minimizes $\tau_S(\sss(j))$. More precisely we say a set of variables $S$ satisfies $C_{j+1}$ if
	\begin{itemize}
		\item $\sum_{x_r \in S}{C_{j+1, r}x_r}\geq 1$, where $C_{j+1, r}$ denotes the coefficient of $C_{j+1}$ for $x_r$.
		\item For each packing constraint $P_i$, $\sum_{x_r \in S}{\frac{1}{k}P_{ir}} \leq 1$.
	\end{itemize}
	Now we add $S_{j+1}$ to $\sss$ and for every $x_r \in S_{j+1}$, we set $x_r=1$. We note that there always exists a set $S$ that satisfies $C_{j+1}$, since we assume there exists an optimal solution with value 1. Setting $S$ to be the set of all variables with value one in an optimal solution which have non-zero coefficient in $C_{j+1}$, satisfies $C_{j+1}$. It only remains to define $\tau_S(\sss(j))$. But before that we need to define $\Delta_i(S)$ and $F_i(\sss(j))$. For packing constraint $P_i$ and subset of variables $S$, we define $\Delta_i(S)$ as
	$\Delta_i(S):=\sum_{x_r\in S}{\frac{1}{k}P_{ir}}$.
	For packing constraint $P_i$ and $\sss(j)$, let
	\begin{equation}\label{eq:fdef}
	F_i(\sss(j)):=\sum_{S \in \sss(j)} \Delta_i(S)\enspace.
	\end{equation}
	Now let $\tau_S(\sss(j))=\sum_{i=1}^m\rho^{F_i(\sss(j))+\Delta_i(S)}-\rho^{F_i(\sss(j))}$,
	where $\rho>1$ is a constant to be defined later.
	
			\begin{algorithm}[!h]
				\textbf{Input:} Packing constraints $P$, and an online stream of covering constraints $C_1, C_2, \ldots$.
				
				\textbf{Output:} A feasible solution for online bounded frequency mixed packing/covering.
				
						\textbf{Offline Process:}
						\begin{algorithmic}[1]
							\STATE Initialize $\sss\gets \emptyset$.
						\end{algorithmic}
				
				\textbf{Online Scheme; assuming a covering constraint $C_{j+1}$ is arrived:}
				
				\begin{algorithmic}[1]
					\STATE $S_{j+1} \gets \argmin_{S }\{\tau_S(\sss(j))\ |\ S \text{ satisfies } C_{j+1}\}$.
					\FORALL {$x_r \in S_{j+1}$}
					\STATE $x_r \gets 1$.
					\ENDFOR
				\end{algorithmic}
				\caption{}
				\label{alg:OMPC}
			\end{algorithm}
	 Let $\x^*$ be an optimal solution, and $\x^*(j)$ denote its values at online stage $j$. We define $G_i(j)$ as
	 \begin{equation}\label{eq:gdef}
	 G_i(j):=\sum_{l=1}^j\sum_{r:C_{lr}>0} {\frac{1}{k}x^*_rP_{ir}}\enspace.
	 \end{equation}
	 Now we define a potential function $\Phi_j$ for online stage $j$.
	 \begin{equation}\label{eq:phi}
	 \Phi_j=\sum_{i=1}^{m} \rho^{F_i(\sss(j))} (\gamma - G_i(j))\enspace,
	 \end{equation}
	 where $\rho, \gamma > 1$ are constants to be defined later.
	 \begin{lemma}\label{lm:monotone}
	 	There exist constants $\rho$ and $\gamma$, such that $\Phi_j$ is non-increasing.
	 \end{lemma}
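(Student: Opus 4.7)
My plan is to directly compute the single-step change $\Phi_{j+1}-\Phi_j$, telescoping so that the terms corresponding to the chosen set $S_{j+1}$ and to the corresponding ``optimal'' set factor cleanly, and then bound the change using: (a) the optimality of $S_{j+1}$ with respect to $\tau_S$, (b) a convexity bound on $\rho^x$, and (c) simple inequalities $\gamma-g_i\leq \gamma$ and $b_i\geq 1$. The constants $\rho,\gamma$ will be chosen at the end so that the resulting coefficient vanishes. Specifically I expect $\rho=3/2$ and $\gamma=2$ to work (any pair with $\gamma(\rho-1)\leq 1$ and $\gamma>1$ will do), since these keep $\Phi$ non-negative and later let us deduce $F_i(\sss)=O(\log m)$.

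First I would expand, writing $a_i=\rho^{F_i(\sss(j))}$, $b_i=\rho^{\Delta_i(S_{j+1})}$, and $d_i=\sum_{r:C_{j+1,r}>0}\tfrac{1}{k}x^*_r P_{ir}$, so that $F_i(\sss(j+1))=F_i(\sss(j))+\Delta_i(S_{j+1})$ and $G_i(j+1)=G_i(j)+d_i$. A short rearrangement gives
\[
\Phi_{j+1}-\Phi_j \;=\; \sum_i a_i(\gamma-g_i)(b_i-1)\;-\;\sum_i a_i b_i d_i,
\]
where $g_i=G_i(j)$. Using $g_i\geq 0$ on the first sum yields the upper bound $\gamma\sum_i a_i(b_i-1)=\gamma\cdot\tau_{S_{j+1}}(\sss(j))$.

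Next I exhibit the competitor $S^*=\{x_r:x^*_r=1,\ C_{j+1,r}>0\}$ and verify it satisfies $C_{j+1}$ in the algorithm's sense: its covering condition follows because $\sum_{r\in S^*} C_{j+1,r}=\sum_{r:C_{j+1,r}>0}x^*_r C_{j+1,r}\geq 1$ by feasibility of $\x^*$; its packing condition $\Delta_i(S^*)\leq 1$ follows from $\Delta_i(S^*)\leq \tfrac1k \sum_r x^*_r P_{ir}\leq \tfrac1k\cdot 1\leq 1$, using $P\x^*\leq 1$. Since $\x^*$ is $\{0,1\}$-valued, note that $\Delta_i(S^*)=d_i$. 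Now by optimality of $S_{j+1}$, $\tau_{S_{j+1}}(\sss(j))\leq \tau_{S^*}(\sss(j))=\sum_i a_i(\rho^{d_i}-1)$, and by convexity of $\rho^x$ for $\rho>1$ we have $\rho^{d_i}-1\leq (\rho-1)d_i$ whenever $d_i\in[0,1]$, which holds since $d_i=\Delta_i(S^*)\leq 1/k\leq 1$. Combining,
\[
\sum_i a_i(\gamma-g_i)(b_i-1)\ \leq\ \gamma(\rho-1)\sum_i a_i d_i.
\]
For the second piece, since $\Delta_i(S_{j+1})\geq 0$ we have $b_i\geq 1$, hence $\sum_i a_i b_i d_i\geq \sum_i a_i d_i$. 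Putting both together,
\[
\Phi_{j+1}-\Phi_j\ \leq\ \bigl(\gamma(\rho-1)-1\bigr)\sum_i a_i d_i.
\]

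It now suffices to pick $\rho,\gamma$ with $\gamma(\rho-1)\leq 1$ and $\gamma>1$ (so that the subsequent competitive bound can be read off from $\Phi$); taking $\rho=3/2,\ \gamma=2$ gives $\gamma(\rho-1)=1$ and the right-hand side is non-positive. The main subtlety, which I would highlight, is ensuring $S^*$ is actually an admissible candidate for $S_{j+1}$ in the algorithm, since the algorithm restricts to sets satisfying both the covering inequality and the $\Delta_i(S)\leq 1$ packing requirement; this is where the covering-frequency assumption implicitly enters (through the $1/k$ scaling that makes $\Delta_i(S^*)\leq 1/k$). Aside from this, the argument is a clean telescoping combined with the standard convexity trick $\rho^x\leq 1+x(\rho-1)$.
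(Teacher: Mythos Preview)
Your proof is correct and follows essentially the same route as the paper: expand $\Phi_{j+1}-\Phi_j$, drop $G_i(j)\ge 0$ to replace $(\gamma-g_i)$ by $\gamma$, use $b_i\ge 1$ (equivalently $F_i(\sss(j+1))\ge F_i(\sss(j))$) on the negative term, invoke optimality of $S_{j+1}$ against the competitor $S^*=\{x_r:x^*_r=1,\ C_{j+1,r}>0\}$, and finish with the convexity bound $\rho^{d_i}-1\le (\rho-1)d_i$ to obtain the constraint $\gamma(\rho-1)\le 1$. Your introduction of $a_i,b_i,d_i$ makes the bookkeeping cleaner than in the paper, but the argument is the same; one small aside: the covering-frequency parameter $k$ is not actually needed here to certify $\Delta_i(S^*)\le 1$ (already $\sum_{r\in S^*}P_{ir}\le \sum_r x^*_r P_{ir}\le 1$), so your closing remark about ``where $k$ enters'' is slightly off---the bounded frequency is really used later, in passing from $F_i(\sss(j))$ to $P_i\cdot\x(j)$.
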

	 \begin{proof}
	 	We find $\rho$ and $\gamma$ such that $\Phi_{j+1}-\Phi_j \leq 0$. By the definition of $\Phi_j$, 	
	 	\begin{equation}\label{eq:pot}
	 	\Phi_{j+1}-\Phi_j=\sum_{i=1}^{m} \rho^{F_i(\sss(j+1))} (\gamma - G_i(j+1)) - \rho^{F_i(\sss(j))} (\gamma - G_i(j))\enspace.
	 	\end{equation}
	 	By Equation \eqref{eq:fdef}, $\rho^{F_i(\sss(j+1))}-\rho^{F_i(\sss(j))}=\rho^{F_i(\sss(j))+\Delta_i(S)} - \rho^{F_i(\sss(j))}$. Moreover by Equation \eqref{eq:gdef},
	 	$(\gamma - G_i(j+1))-(\gamma - G_i(j))=-\sum_{r:C_{j+1, r}>0} \frac{1}{k}x^*_rP_{ir}$.	 	
	 	For simplicity of notation we define $B_i(j+1):=\sum_{r:C_{j+1, r}>0} \frac{1}{k}x^*_rP_{ir}$.
	 	Thus we can write Equation \eqref{eq:pot} as:
	 	\begin{align}\label{eq:pot2}
	 	\Phi_{j+1}-\Phi_j=&\sum_{i=1}^{m} \rho^{F_i(\sss(j+1))} (\gamma - G_i(j)-B_i(j+1)) - \rho^{F_i(\sss(j))} (\gamma - G_i(j))
	 		 	\end{align}
	 		 	\vspace{-.7cm}
	 		 	\noindent
	 		 	\begin{align*}
	 	=&\sum_{i=1}^{m}{(\gamma-G_i(j))(\rho^{F_i(\sss(j))+\Delta_i(S)} - \rho^{F_i(\sss(j))})}-{\rho^{F_i(\sss(j+1))}B_i(j+1)} &\text{Since }G_i(j) \geq 0\\\nonumber
	 	\leq&\sum_{i=1}^{m}{\gamma(\rho^{F_i(\sss(j))+\Delta_i(S)} - \rho^{F_i(\sss(j))})}-{\rho^{F_i(\sss(j+1))}B_i(j+1)}
	 	&F_i(\sss(j+1)) \geq F_i(\sss(j))\\ \nonumber
	 	\leq&\sum_{i=1}^{m}{\gamma(\rho^{F_i(\sss(j))+\Delta_i(S)} - \rho^{F_i(\sss(j))})}-{\rho^{F_i(\sss(j))}B_i(j+1)}\enspace.
	 	\end{align*}
	 	Now according to the algorithm for each subset of variables $S'$ such that $\sum_{x_r \in S'}C_{j+1}(x_r)\geq 1$, either $\tau_{S}(\sss(j)) \leq \tau_{S'}(\sss(j))$ or there exists a packing constraint $P_i$ such that $\Delta_i(S') > 1$. In $B_i(j+1)$, we are considering variables $x_r$ such that $x^*_e=1$, thus for every $P_i$, $B_i(j+1) \leq 1$. Therefore setting $S'$ to be the set of variables $x_r$ such that $x^*_r=1$ and $C_{j+1, r}>0$, we have $\tau_{S}(\sss(j)) \leq \tau_{S'}(\sss(j))$. Thus $\sum_{i=1}^m\rho^{F_i(\sss(j))+\Delta_i(S)} - \rho^{F_i(\sss(j))} \leq \sum_{i=1}^m\rho^{F_i(\sss(j))+B_i(j+1)} - \rho^{F_i(\sss(j))}$. Therefore we can rewrite Inequality \eqref{eq:pot2} as
	 	\begin{align}
	 	\Phi_{j+1}-\Phi_j\leq&\sum_{i=1}^{m}{\gamma(\rho^{F_i(\sss(j))+B_i(j+1)} - \rho^{F_i(\sss(j))})}-{\rho^{F_i(\sss(j))}B_i(j+1)}
	 	\\\nonumber
	 	=&\sum_{i=1}^{m}{\rho^{F_i(\sss(j))}(\gamma \rho^{B_i(j+1)}-\gamma-B_i(j+1))}\enspace.
	 	\end{align}
	 	We would like to find $\rho$ and $\gamma$ such that $\Phi_j$ is non-increasing. We find $\rho$ and $\gamma$ such that for each packing constraint $P_i$, $\gamma \rho^{B_i(j+1)}-\gamma-B_i(j+1) \leq 0$. Thus
	 	\begin{align}
	 	\gamma\rho^{B_i(j+1)}-\gamma &\leq B_i(j+1) \quad\quad &\text{Since $0 \leq B_i(j+1) \leq 1$}\\
	 	\gamma\rho{B_i(j+1)}-\gamma &\leq B_i(j+1)\quad\quad &\text{By simplifying}\\
	 	\rho &\leq 1+1/\gamma\enspace.
	 	\end{align}
	 	Thus if we set $\rho \leq 1+1/\gamma$, $\Phi_j$ is non-increasing, as desired.
	 \end{proof}
	 
	 Now we prove Algorithm \ref{alg:OMPC} obtains a solution of at most $O(k\log m)$.
	 \begin{lemma}\label{lm:mohem}
		Given an online bounded frequency mixed packing covering IP with optimal value 1, there exists a deterministic integral algorithm with competitive ratio $O(k \log m)$, where $m$ is the number of packing constraints and $k$ is the covering frequency of the IP.
	 \end{lemma}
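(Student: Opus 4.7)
The plan is to exploit the potential function $\Phi_j$ from (\ref{eq:phi}) together with Lemma \ref{lm:monotone} to produce an $O(\log m)$ bound on $F_i(\mathcal{S}(j))$ for every packing constraint $P_i$ at every online stage $j$, and then to translate this into an $O(k\log m)$ bound on the algorithm's actual load $\sum_r P_{ir} x_r$. Throughout I will fix $\gamma = 2$ and $\rho = 1 + 1/\gamma = 3/2$, which satisfies the hypothesis of Lemma \ref{lm:monotone}.

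I would proceed in three steps. First, compute $\Phi_0$: since $\mathcal{S}(0) = \emptyset$ and $G_i(0) = 0$, each summand equals $\rho^{0}(\gamma - 0) = \gamma$, giving $\Phi_0 = m\gamma = 2m$. Second, show $G_i(j) \leq 1$ at every stage. This is where both hypotheses are used together: by the covering-frequency bound $k$, in the double sum $G_i(j) = \sum_{l=1}^{j}\sum_{r:C_{lr}>0} \tfrac{1}{k} x^*_r P_{ir}$ each variable $x^*_r$ is counted at most $k$ times across the $l$ index, so $G_i(j) \le \sum_r x^*_r P_{ir}$; because the optimal value is assumed to be $1$, the feasibility of $x^*$ for packing constraint $P_i$ then yields $\sum_r x^*_r P_{ir} \leq 1$. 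Combining these, $\gamma - G_i(j) \geq \gamma - 1 = 1$, and Lemma \ref{lm:monotone} gives $\rho^{F_i(\mathcal{S}(j))}(\gamma - 1) \leq \Phi_j \leq \Phi_0 = 2m$, so $F_i(\mathcal{S}(j)) \leq \log_{3/2}(2m) = O(\log m)$.

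The final step converts this into a bound on the actual packing load. By the definitions, $k \cdot F_i(\mathcal{S}(j)) = \sum_{S \in \mathcal{S}(j)}\sum_{x_r \in S} P_{ir}$. A variable $x_r$ with $x_r = 1$ contributes $P_{ir}$ exactly once to $\sum_r P_{ir} x_r$, but contributes once for every set $S \in \mathcal{S}(j)$ containing it to the double sum; since sets may overlap, we have $\sum_r P_{ir} x_r \leq k\, F_i(\mathcal{S}(j)) = O(k\log m)$. Because $\alpha^* = 1$, this is exactly the claimed $O(k\log m)$ competitive ratio, and the algorithm produces an integral solution by construction.

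The one step I expect to demand the most care is the bound $G_i(j) \leq 1$: it is the only place where the covering-frequency hypothesis and the packing-feasibility of $x^*$ are simultaneously invoked, and a naive accounting that did not exploit the $\tfrac{1}{k}$ scaling in the definition of $\Delta_i$ (and hence of $G_i$) would lose an extra factor of $k$ and degrade the final ratio. The choice of $\gamma$ and $\rho$ is straightforward once this bound is in place.
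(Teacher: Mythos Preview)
Your proof is correct and follows essentially the same route as the paper: use Lemma~\ref{lm:monotone} together with $\Phi_0=\gamma m$ and the nonnegativity of each summand to get $\rho^{F_i(\mathcal{S}(j))}\le \gamma m/(\gamma-1)$, hence $F_i(\mathcal{S}(j))=O(\log m)$, and then convert to the actual load via $\sum_r P_{ir}x_r\le k\,F_i(\mathcal{S}(j))$. Your explicit derivation of $G_i(j)\le 1$ from the covering-frequency bound and the packing-feasibility of $x^*$ is in fact more carefully spelled out than in the paper, which merely asserts the inequality.
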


	 \begin{proof}
	 	By Lemma \ref{lm:monotone} for each stage $j$, $\Phi_{j+1} \leq \Phi_j$. Therefore $\Phi_j\leq \Phi_0=\gamma m$. Thus for each packing constraint $P_i$,
	 	\begin{equation}
	 	\rho^{F_i(\sss(j))}(\gamma-G_i(j)) \leq \gamma m\enspace.
	 	\end{equation}
	 	Thus,
	 	\begin{align}
	 	\rho^{F_i(\sss(j))} &\leq \frac{\gamma m}{(\gamma-G_i(j))}
	 	\leq \frac{\gamma m}{\gamma-1}\enspace. \quad\quad \text{Since $G_i(j) \leq 1$}
	 	\end{align} 
	 	Thus we can conclude 
	 	\begin{equation}\label{eq:last}
	 	F_i(\sss(j)) \in O(\log m)\enspace. 
	 	\end{equation}
	 	By definition of $F_i(\sss(j))$, $F_i(\sss(j))=\sum_{S\in \sss(j)}\Delta_i(S)=\sum_{S \in \sss(j)}\sum_{x_r \in S}\frac{1}{k}P_{ir}$. Since each variable $x_r$ is present in at most $k$ sets, $\frac{1}{k}P_i\cdot \x(j)\leq F_i(\sss(j))\enspace .$
	 	Thus by Inequality \eqref{eq:last} $P_i\x(j) \in O(k \log m)$, which completes the proof.
	 \end{proof}
	 
	 In Theorem \ref{thm:radife} we prove there exists an online $O(k \log m)$-competitive algorithm for bounded frequency online mixed packing and covering integer programs with any optimal value.
	 \begin{theorem}\label{thm:radife}
	 	Given an instance of the online mixed packing/covering IP, there exists a deterministic integral algorithm with competitive ratio $O(k \log m)$, where $m$ is the number of packing constraints and $k$ is the covering frequency of the IP.
	 \end{theorem}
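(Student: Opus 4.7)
The plan is to reduce Theorem \ref{thm:radife} to Lemma \ref{lm:mohem} by a standard doubling guess on the optimal value $\alpha^*$. Since Lemma \ref{lm:mohem} handles the case $\alpha^*=1$, I would design a wrapper that runs Algorithm \ref{alg:OMPC} in phases indexed by $p=0,1,2,\ldots$, where phase $p$ operates under the hypothesis $\alpha^*\le 2^p$.

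Concretely, in phase $p$ I would rescale the packing coefficients by $2^{-p}$ and invoke Algorithm \ref{alg:OMPC} on the rescaled IP. Equivalently, I redefine ``$S$ satisfies $C_{j+1}$'' to require $\Delta_i(S)\le 2^p$ for all $i$, and leave the covering check unchanged. When processing an incoming covering constraint $C_{j+1}$ in phase $p$, I check whether some set $S$ satisfies $C_{j+1}$ under the current budget $2^p$; if so, choose $S_{j+1}=\argmin_S \tau_S(\mathcal S(j))$ as before and set the corresponding variables to $1$. If no such $S$ exists, I advance to phase $p+1$, reset the internal bookkeeping $\mathcal S$ to $\emptyset$ (while keeping all previously-set variables at $1$, since variables are monotone), and re-attempt $C_{j+1}$ in the new phase. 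The key observation justifying the phase transition is that the optimal solution $\x^*$ restricted to variables with $C_{j+1,r}>0$ gives a feasible witness $S^*$ with $\Delta_i(S^*)\le \alpha^*$ for every $i$; so if no valid $S$ exists under budget $2^p$, then necessarily $\alpha^*>2^p$, and we are correct to advance.

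For the analysis, let $p^*=\lceil \log_2 \alpha^*\rceil$, so the algorithm never advances beyond phase $p^*$. Applying Lemma \ref{lm:mohem} to the rescaled instance in each phase $p\le p^*$, the variables added during phase $p$ contribute at most $O(k\log m)\cdot 2^p$ to every packing constraint's load. Summing over all phases using monotonicity of the $x_r$'s,
\begin{equation*}
P_i\x \;\le\; \sum_{p=0}^{p^*} O(k\log m)\cdot 2^p \;=\; O(k\log m)\cdot 2^{p^*+1} \;=\; O(k\log m)\cdot \alpha^*,
\end{equation*}
which is the desired $O(k\log m)$-competitive guarantee against $\alpha^*$. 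Feasibility of covering constraints is preserved across phases because variables only increase.

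The step I expect to require the most care is the phase transition itself: I must ensure that resetting $\mathcal S$ does not invalidate the potential-function argument of Lemma \ref{lm:monotone}, which is justified because that potential is internal to a single run of Algorithm \ref{alg:OMPC} and the analysis only needs $G_i(j)\le 1$ in the rescaled instance, a bound that holds whenever $\alpha^*\le 2^p$. A secondary subtlety is that the same covering constraint $C_{j+1}$ may be re-examined in the new phase after a transition; this is harmless since the covering constraint cares only about the current values of the $x_r$, not about which phase set them.
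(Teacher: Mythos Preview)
Your doubling strategy matches the paper's overall approach, but your phase-advance trigger differs from the paper's and leaves a gap in the analysis. You advance only when no set $S$ with $\Delta_i(S)\le 2^p$ satisfies the current covering constraint, and you correctly argue that if this happens then $\alpha^*>2^p$. The problem is the converse: it is entirely possible that $\alpha^*>2^p$ and yet a valid $S$ exists for every arriving constraint, so the algorithm lingers in an early phase $p<p^*$. For such a phase your appeal to Lemma~\ref{lm:mohem} is unjustified: that lemma (through Lemma~\ref{lm:monotone}) requires the rescaled optimum to be at most $1$, i.e.\ $\alpha^*\le 2^p$ --- precisely the condition you yourself flag when discussing $G_i(j)\le 1$, but then disregard when you claim the bound ``for each phase $p\le p^*$''. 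Consequently nothing in your argument bounds the packing load accumulated during phases with $\alpha^*>2^p$, and the sentence ``the variables added during phase $p$ contribute at most $O(k\log m)\cdot 2^p$'' is not established for $p<p^*$.

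The paper closes this gap by triggering the phase transition on the \emph{observed load} rather than on infeasibility of $S$: it advances as soon as updating the solution would push some packing constraint above the threshold $\beta\,k\log m$ in the current rescaled instance. This trigger (i) directly caps the per-phase contribution to $O(k\log m)\cdot 2^p$ regardless of how $\alpha^*$ compares to $2^p$, and (ii) by Lemma~\ref{lm:mohem}, can fire only when $\alpha^*>2^p$, so one still never advances past phase $p^*$. With this modification your geometric-series computation and the rest of your argument go through verbatim.
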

\section{Putting Everything Together}\label{sec:everything}
In this section we consider the online mixed packing/covering formulation discussed in Section \ref{sec:rightLP} for \oewdbsf{} \ref{IP:good}. In this section we show this formulation is an online bounded frequency mixed packing/covering IP. Therefore we our techniques discussed in Section \ref{sec:OMPC} to obtain a polylogarithmic-competitive algorithm for \oewdbsf{}.

First we assume we are given the optimal weight $\wopt$ as well as degree bounds. We can obtain the following theorem.
\begin{theorem}
	Given the optimal weight $\wopt$, there exists an online deterministic algorithm which finds a subgraph with
	total weight at most $O(\log^3 n) w_{\opt}$ while the degree bound of a
	vertex is violated by at most a factor of $O(\log^3(n)$.
\end{theorem}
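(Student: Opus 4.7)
The plan is to combine the IP formulation \ref{IP:good} from Section \ref{sec:rightLP} with the generic online algorithm for bounded-frequency mixed packing/covering IPs given by Theorem \ref{thm:radife}. First I would instantiate \ref{IP:good} using the known $\wopt$ to normalize the weight packing constraint \eqref{cons:p2}. The crucial structural observation is that each variable $x_H^i$, indexed by a subgraph $H \subseteq G$ and a demand index $i$, appears with a positive coefficient only in the single covering constraint \eqref{cons:c1} for the $i$-th demand; hence the covering frequency of the IP is $k = 1$. The number of packing constraints is $m = n + 1$: one per vertex from \eqref{cons:p1} together with the weight constraint \eqref{cons:p2}.

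Next I would feed \ref{IP:good} online to the algorithm of Theorem \ref{thm:radife}: as each demand $\sigma_i$ arrives, so does its covering constraint. The theorem returns an integral $\x$ with packing violation at most $\alpha \le O(k\log m)\cdot \pcopt = O(\log n)\cdot \pcopt$. A truly optimal offline subgraph yields a feasible solution to \ref{IP:sf} with $\alpha = 1$ (every vertex load at most $1$ and total weight exactly $\wopt$), so $\lpopt \le 1$. Lemma \ref{mohem} then gives $\pcopt \le O(\log^2 n)\cdot \lpopt = O(\log^2 n)$, and chaining the two inequalities yields $\alpha = O(\log^3 n)$.

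Finally I would apply Lemma \ref{lm:sf1} to translate the integral solution of \ref{IP:good} into a feasible edge set $Q$ of \ref{IP:sf} carrying the same $\alpha$; concretely, $Q$ is the union of all chosen subgraphs, and this is an actual subgraph of $G$ that connects every demand. The vertex packing constraint \eqref{cons:p1} then gives $\deg_Q(v) \le \alpha\cdot b_v = O(\log^3 n)\cdot b_v$ for every $v$, while the weight packing constraint \eqref{cons:p2} gives $w(Q) \le \alpha\cdot \wopt = O(\log^3 n)\cdot \wopt$, as required.

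Most of the heavy lifting has been done in Sections \ref{sec:rightLP} and \ref{sec:OMPC}, so the main remaining obstacle is merely the bookkeeping that checks \ref{IP:good} really fits the bounded-frequency template with small $k$ and small $m$. The per-demand indexing $x_H^i$ is exactly what drives $k=1$, at the cost of an exponential variable set; the reason this does not spoil the bound is that the competitive ratio in Theorem \ref{thm:radife} depends only on $k$ and on the number of packing constraints $m$, and is independent of the number of variables or the number of covering constraints.
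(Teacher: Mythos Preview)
Your proposal is correct and follows essentially the same argument as the paper's own proof: instantiate \ref{IP:good} with the known $\wopt$, observe that the covering frequency is $k=1$ and the number of packing constraints is $m=n+1$, apply Theorem~\ref{thm:radife} to get an $O(\log n)$-competitive integral solution, combine with Lemma~\ref{mohem} (and $\lpopt\le 1$) to bound $\pcopt=O(\log^2 n)$, and finally invoke Lemma~\ref{lm:sf1} to pass to an actual subgraph with the same violation~$\alpha$. Your write-up is slightly more explicit than the paper's (e.g., you spell out $\lpopt\le 1$ and the concrete construction of $Q$ as a union), but the logical structure is identical.
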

\begin{proof}
	By Lemma \ref{lm:sf1}, given a feasible online solution for \ref{IP:good} with violation $\alpha$, we can provide an online solution for \ref{IP:sf} with violation $\alpha$. Moreover by Lemma \ref{mohem}, $\pcopt \leq O(\log^2 n)\lpopt$. Thus given an online solution for \ref{IP:good} with competitive ratio $f$, there exists an $O(f \log n)$-competitive algorithm for \odbsf{}. We note that in \ref{IP:good} we know the packing constraints in advance. In addition every variable $x_H^i$ has non-zero coefficient only in the covering constraint corresponding to connectivity of the $i$-th demand endpoints, i.e. the covering frequency of every variable is 1. Therefore by Theorem \ref{thm:radife} there exists an online $O(\log m)$-competitive solution for \ref{IP:good}, where $m$ is the number of packing constraints, which is $n+1$. Thus there exists an online $O(\log^3 n)$-competitive algorithm for \odbsf{}. This means the violation for both degree bounds and weight is of $O(\log^3 n)$.
\end{proof}

Now we assume we are not given $\wopt$. The following theorems directly hold by applying the doubling technique mentioned in Section \ref{sec:doubling}.

\begin{theorem}
	There exists an online deterministic algorithm which finds a subgraph with
	total weight at most $O(\log^3 n) w_{\opt}$ while the degree bound of a
	vertex is violated by at most a factor of $O(\log^3(n)\log(\wopt))$.
\end{theorem}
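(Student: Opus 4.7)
The plan is to apply a standard doubling technique on the unknown value of $\wopt$. I maintain a running guess $W$, initialized to $W_0 = 1$, and at any time run the algorithm of the previous theorem with the guess $W$ used to normalize the weight packing constraint~\eqref{cons:p2}. After each arriving demand is processed, I check the total weight of edges purchased in the current phase. Whenever this weight exceeds $c\log^3 n \cdot W$ for the constant $c$ implicit in the previous theorem, I conclude that the current guess was too small, double $W$ (advancing from phase $i$ with $W_i = 2^i$ to phase $i{+}1$ with $W_{i+1}=2W_i$), and reinitialize the family $\mathcal{S}$ maintained by Algorithm~\ref{alg:OMPC}. All edges bought so far remain in the solution and may still be used to satisfy the connectivity of future demands.

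To bound the total weight, let $p$ be the index of the final phase. A phase is terminated only when its guess $W_i$ has been exceeded; by the previous theorem, once $W \ge \wopt$ this never happens, so $W_{p-1} < \wopt$, hence $W_p \le 2\wopt$. Summing the per-phase bound $c\log^3 n \cdot W_i$ over $i = 0, 1, \ldots, p$ yields a geometric series dominated by its last term, giving a total weight of $O(\log^3 n)\cdot \wopt$, as claimed.

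For the degree-bound violation, note that within a single phase the previous theorem guarantees load at most $O(\log^3 n)$ at every vertex. Degrees are additive across phases, since edges bought in earlier phases persist, and the number of phases is $p+1 = O(\log \wopt)$ because $W_p \le 2\wopt$. Hence every vertex accumulates load at most $O(\log^3 n \cdot \log \wopt)$, matching the stated bound.

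The main subtlety I expect to handle is verifying that restarting Algorithm~\ref{alg:OMPC} on the residual stream of demands remains correct even though edges from earlier phases are already present. This is fine because \ref{IP:good} encodes the connectivity of the $i$-th demand via the auxiliary edges $R_\sigma(i{-}1)$ between past endpoints, so each demand is self-contained in its own covering constraint; the feasibility and potential-based competitive analysis of Algorithm~\ref{alg:OMPC} only references covering constraints seen within the current phase, and pre-existing edges can only help (never hinder) satisfying them. Combined with the two bounds above, this completes the argument.
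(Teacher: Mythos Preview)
Your doubling outline is the same idea the paper uses (Section~\ref{sec:doubling}), but there is a genuine gap in the degree argument. You invoke the previous theorem to assert that ``within a single phase the previous theorem guarantees load at most $O(\log^3 n)$ at every vertex.'' That theorem, however, is stated under the hypothesis that the correct value $\wopt$ is supplied; in every phase where your running guess $W<\wopt$ this hypothesis fails, and your trigger---which monitors only the total weight---does nothing to control the degree packing constraints during such a phase. Concretely, when $W$ underestimates $\wopt$, the weight constraint~\eqref{cons:p2} in \ref{IP:good} is artificially tightened by a factor $\wopt/W$, so the optimum of the phase's sub-IP can be as large as $\Theta(\wopt/W)\cdot\pcopt$; the competitive guarantee of Theorem~\ref{thm:radife} then only yields a per-phase degree violation of order $(\wopt/W)\cdot\log^3 n$, and summing this over phases with $W=1,2,4,\dots$ does not give $O(\log^3 n\cdot\log\wopt)$.

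The paper closes exactly this gap by changing the trigger: a phase ends as soon as \emph{any} packing constraint is about to exceed the threshold, not just the weight constraint (see the statement preceding Lemma~\ref{lm:update1}). With that rule, the per-phase bound on every packing constraint holds \emph{by construction}, independent of whether the current guess is correct; multiplying by the $O(\log\wopt)$ phases then gives the degree bound, while the geometric sum for weight goes through as you wrote. The fix to your argument is precisely this: monitor all packing constraints (degrees as well as weight) when deciding to advance a phase. A secondary detail worth tightening is that you check the weight \emph{after} processing a demand, so the last demand of a phase can overshoot the threshold by an uncontrolled amount; checking \emph{before} committing (as the paper does) avoids this.
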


Moreover if one favors the degree bound over total weight we  obtain the following bounds.
\begin{theorem}
	There exists an online deterministic algorithm which finds a subgraph with
	total weight at most $O(\log^3 n\frac{\log \wopt}{\log \log \wopt}) w_{\opt}$ while the degree bound of a
	vertex is violated by at most a factor of $O(\log^3 n\frac{\log \wopt}{\log \log \wopt})$.
\end{theorem}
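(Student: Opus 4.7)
The plan is to reuse the doubling scheme from Section~\ref{sec:doubling} used to derive the preceding theorem, but to replace the hard-coded doubling factor $2$ by a larger base $c$ chosen to balance the weight loss against the degree loss. With $c=2$ the doubling keeps the weight bound at $O(\log^3 n)\wopt$ and pays a multiplicative $\log\wopt$ in the degree violation; I will trade a factor of $\Theta(\log\wopt/\log\log\wopt)$ in the weight in order to save a corresponding factor in the degree, so that the two quantities equalize.

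Concretely, I would run the ``known-$\wopt$'' algorithm in phases $i=0,1,2,\dots$, where phase $i$ pretends $\wopt = W_i := c^{\,i}W_0$. Edges bought in earlier phases are retained; phase $i$ ends and phase $i{+}1$ starts as soon as the online cost of phase $i$ would exceed $f\cdot W_i$, where $f=O(\log^3 n)$ is the competitive ratio of the known-$\wopt$ algorithm established in the preceding theorem. Since that algorithm is $f$-competitive whenever $W_i\ge \wopt$, the final phase $i^*$ satisfies $W_{i^*}\le c\,\wopt$, and the total number of phases is $i^*+1 = O(\log_c\wopt)$. Because the weights bought in successive phases form a geometric progression of ratio $c$, the final total weight is bounded by
\[
\sum_{i=0}^{i^*} f\cdot W_i \;\le\; f\cdot W_{i^*}\cdot\tfrac{c}{c-1} \;=\; O(f\cdot c)\cdot\wopt,
\]
and the degree violations across phases add to give $O(f\cdot i^*) = O\!\left(f\log\wopt/\log c\right)$.

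To balance these two expressions I would set $c := \log\wopt/\log\log\wopt$, so that $\log c = \Theta(\log\log\wopt)$. This forces both $f\cdot c$ and $f\log\wopt/\log c$ to equal $O\!\left(\log^3 n\cdot\log\wopt/\log\log\wopt\right)$, which is exactly the claim.

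The main obstacle is that the ``balanced'' base $c$ depends on $\wopt$, which is unknown online. I would circumvent this by invoking the universal bound $\wopt\le n\cdot w_{\max}$ (equivalently $\log\wopt = O(\log(n\rho))$ after scaling by $w_{\min}$) and committing in advance to $c := \log(n\rho)/\log\log(n\rho)$. Using the monotonicity of $x/\log x$ and the fact that $\log\wopt$ and $\log(n\rho)$ differ only by constant factors in the regime where the theorem is non-trivial, substituting this explicit upper bound for $\log\wopt$ in the choice of $c$ changes the final bounds by at most constant factors. Verifying this monotonicity/substitution step is the only piece of analysis that goes beyond the geometric-series bookkeeping outlined above.
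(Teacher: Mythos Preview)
Your proposal is correct and matches the paper's approach essentially line for line: the paper's Lemma~\ref{lm:update3} sets the multiplicative update factor to $r=\log\wopt/\log\log\wopt$, yielding $l=O(\log\wopt/\log\log\wopt)$ phases, and then invokes Lemma~\ref{lm:update1} to get the balanced $O(\log n\cdot\log\wopt/\log\log\wopt)$ violation on both packing constraints, which combines with the $O(\log^2 n)$ loss of Lemma~\ref{mohem} to give the stated bound. You are in fact more careful than the paper on one point: the paper simply sets $r=\log\wopt/\log\log\wopt$ without commenting on the fact that $\wopt$ is unknown, whereas you explicitly substitute the a~priori upper bound $\log(n\rho)$ and argue that this only costs constant factors.
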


\newpage
\bibliographystyle{abbrv}
\bibliography{dbbib}
\newpage
\appendix
\section{Lower Bound}\label{sec:lower}
We present a (randomized) instance $I$ for mixed packing and covering linear
programs with the following parameters. $I$ consists of $m$ packing
constraints, $(2m-2)d$ variables and a variable is only contained in $\log_2d$
covering constraints. There exists an optimum integral solution for $I$ with
violation $\alpha=1$. Any (fractional) online algorithm $A$ incurs an expected
violation of at least $\Omega(\log_2m\cdot\log_2d)$, where the expectation is w.r.t.\
the randomized construction of $I$ and w.r.t.\ the random choices of $A$ in
case $A$ uses randomization. This gives Theorem~\ref{thm:mixedIPlower}.

%

\noindent
For the following description of the instance we assume that $m$ and $d$ are
powers of $2$. Consider a binary tree $T$ with $m$ leaf nodes. For each edge
$e$ in this tree we introduce $d$ variables, and we denote the set of variables
for edge $e$ with $X_e$. For each leaf node $v$ we introduce the packing
constraint
$\sum_{e\in P_v}\sum_{x\in X_e}x \le \alpha$,
where $P_v$ denotes the path from the root $r$ to $v$ in the tree. 

The covering constraints are constructed in an online manner according to a
random root-to-leaf path $P_\ell$ in the tree ($\ell$ is a random leaf node).
For a non-leaf node $v$ on this path (starting at the root and ending at the
parent of $\ell$) we construct $\log_2d$ covering constraint only involving
variables from $X_{e_L}\cup X_{e_R}$, where $e_L$ and $e_R$ denote the
child-edges of $v$. This sequence of covering constraints is constructed
according to the following lemma. For a set $X$ of variables we use $w(X)$ to
denote the total value assigned to these variables by the online algorithm.

\begin{lemma}
\label{lem:coveringconstraints}
Given two sets $X_L$ and $X_R$ of variables, each of cardinality $d$. There is
a randomized sequence of $\log_2d$ covering constraints over variables from $X_L\cup X_R$ 
such that any online algorithm has $E[w(X_L\cup X_R)]\ge \frac{1}{2}\log_2d$,
after fulfilling these constraints.

Furthermore, there exist variables $x_\ell\in X_L, x_r\in X_R$ such that
setting either of these variables to one already fulfills all constraints.
\end{lemma}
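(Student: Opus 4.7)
The plan is to construct the $\log_2 d$ covering constraints by recursively halving $X_L$ and $X_R$ with random bits. First I would set $A_L^{(0)}:=X_L$ and $A_R^{(0)}:=X_R$, and for each $i=1,\dots,\log_2 d$ draw an independent uniform bit $b_i^L\in\{0,1\}$, split $A_L^{(i-1)}$ into two halves of equal size, and let $A_L^{(i)}$ be the $b_i^L$-th half; define $A_R^{(i)}$ analogously with an independent bit $b_i^R$. Writing $V_i:=A_L^{(i)}\cup A_R^{(i)}$, the $i$-th covering constraint would be $\sum_{v\in V_i}x_v\ge 1$. The sets $V_i$ are nested and $|V_{\log_2 d}|=2$: the two surviving variables $x_\ell\in X_L$ and $x_r\in X_R$ lie in every $V_i$, so setting either one to $1$ alone already fulfills all constraints. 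This disposes of the furthermore clause.

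For the expected-weight lower bound I would track $w_{v,j}$, the (adaptive, possibly random) mass the online algorithm puts on variable $v$ while processing the $j$-th constraint. Because the $V_i$'s only shrink, any weight placed on a dead variable $v\notin V_j$ at step $j$ is useless for all present and future constraints, so we may assume $w_{v,j}>0\Rightarrow v\in V_j$. Summing the feasibility inequalities $\sum_{v\in V_i}\sum_{j\le i}w_{v,j}\ge 1$ over $i=1,\dots,\log_2 d$ and swapping orders of summation then yields
\[
\log_2 d\;\le\;\sum_{v,j} w_{v,j}\cdot\bigl|\{i\ge j:v\in V_i\}\bigr|.
\]
The heart of the argument is a conditional expectation: given all bits through step $j$ (which determine $w_{v,j}$) and the event $v\in V_j$, the survival count $|\{i\ge j:v\in V_i\}|$ records how many subsequent independent $\tfrac12$-coin flips keep $v$ alive, so its conditional expectation equals $\sum_{k=0}^{\log_2 d-j}2^{-k}<2$. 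Plugging this into the displayed inequality via the tower property gives $\log_2 d\le 2\,E[w(X_L\cup X_R)]$, which is the stated bound.

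The main obstacle will be this decoupling of adaptive past from independent future: $w_{v,j}$ depends on $b_1^L,b_1^R,\dots,b_j^L,b_j^R$ while the survival count $|\{i\ge j:v\in V_i\}|$ depends on the independent bits $b_{j+1}^L,b_{j+1}^R,\dots$, and the bound only goes through because we can condition on the past before taking the expectation of the survival count. Once this is handled cleanly, everything else reduces to linearity of expectation and a geometric-series computation.
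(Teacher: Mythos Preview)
Your construction—recursive random halving of $X_L$ and $X_R$, with nested covering constraints on the surviving sets—is the paper's construction, up to a harmless index shift and the cosmetic difference that you pick one of two fixed halves by a coin flip whereas the paper deletes a uniformly random subset of half the size. The accounting, however, differs. You sum all $\log_2 d$ feasibility inequalities and bound each weight increment's expected survival count by the geometric series $\sum_{k\ge 0}2^{-k}<2$ via the tower property. The paper instead observes that after the constraint in round $i$ is satisfied the active weight is at least $1$, so the weight on the random half about to be deleted has conditional expectation at least $\tfrac12$; since the deleted sets across rounds are disjoint, summing gives $E[w(X_L\cup X_R)]\ge \tfrac12\log_2 d$ directly. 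Both arguments are correct and rest on the same independence (the next halving bit is independent of the algorithm's current state); the paper's version is a bit shorter because it needs one conditional-expectation step per round rather than one per increment $(v,j)$, and it avoids the WLOG reduction to algorithms that never place weight on dead variables.
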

\begin{proof}
Define $A_L\subseteq X_L$ and $A_R\subseteq X_R$, as a set of \emph{active
  variables} of $X_L$ and $X_R$, respectively. Initially all variables in $X_L$
and $X_R$ are active, i.e., $A_L=X_L$ and $A_R=X_R$.

The constraints are constructed in $\log_2d$ rounds. In the beginning of the
$i$-th round ($i\in\{0,\dots,\log_2d-1\}$) $|A_L|=|A_R| = d/2^i$. We offer a
covering constraint on the current set of active variables, i.e., we offer
constraint
$\sum_{x\in A_L}x+\sum_{x\in A_R}x \ge 1$.
Then we remove 50\% of the elements from $A_L$, and 50\% of the elements from
$A_R$, at random, i.e., from each set we remove a random subset of cardinality
$d/2^{i+1}$.

After fulfilling the covering constraint for the $i$-th round, we have
$w(A_L)+w(A_R)\ge 1$. Removing random subsets from $A_L$ and $A_R$, removes 
variables of expected total weight at least $w(A_L)/2+w(A_R)/2\ge 1/2$. Hence,
the total expected weight removed from active sets during all rounds is at
least $1/2\cdot\log_2d$. This gives the bound on the expected weight of
variables.

Note that any variable that is active in the final round is contained in every
constraint; setting any of these variables to one fulfills all constraints.
Since, neither $A_L$ nor $A_R$ is empty in the final round the lemma follows.
\end{proof}

Note that by this construction an optimal offline algorithm can fulfill the
covering constraints of a node $v$ by setting a single variable from set $X_e$
to 1, where $e$ is the child-edge of $v$ that is \emph{not} on the path
$P_\ell$. Then, along any root-to-leaf path at most one set $X_e$ contains a variable
that is set to 1, and, hence, the maximum violation is $\alpha=1$. 

For the online algorithm we show that
$E[{\textstyle\sum_{e\in P_\ell}w(X_e)}]\ge \frac{1}{4}(\log_2m-1)\cdot\log_2d$,
which means that in expectation the maximum violation is at least $\Omega(\log
m\cdot\log d)$. To see this, we represent the path $P_\ell$ by a sequence of
left-right decisions. For $i=1,\dots,\log_2m-1$ define the binary random variable
$Y_i$ that is $1$ if at the $i$-th node the path $P_\ell$ continues left, and
is $0$ otherwise.
Further, we use $W_i^L$ and $W_i^R$ to denote the random variable that describes
the total weight assigned to variables in $X_{e_L}$ and $X_{e_R}$,
respectively, where $e_L$ and $e_R$, are the child-edges of the $i$-th node on
path $P_\ell$. Then,
\begin{equation*}
\begin{split}
E\Bigg[{\sum_{e\in P_\ell}w(X_e)}\Bigg]
&=E\Bigg[\sum_{i=1}^{\log_2m-1}Y_i\cdot W_i^L+(1-Y_i)\cdot W_i^R\Bigg]
=\sum_{i=1}^{\log_2m-1}E[Y_i]\cdot E[W_i^L]+E[1-Y_i]\cdot E[W_i^R]\\
&=\sum_{i=1}^{\log_2m-1}\frac{1}{2}E[W_i^L+W_i^R]
\ge \frac{1}{4}(\log_2m-1)\log_2d\enspace.
\end{split}
\end{equation*}
Here, the second equality follows as $Y_i$ is independent from $W_i^L$ and
$W_i^R$; the last inequality holds due to Lemma~\ref{lem:coveringconstraints}.


\section{Omitted Proofs of Section \ref{sec:rightLP}}

\begin{proofof}{Lemma \ref{lm:sf1}}
Let $(\alpha,x)$ be an optimal solution of \ref{IP:sf} for a given graph $G$ and a sequence of demands $\sigma$. We define subgraph $F$ as the union of all edges of $G$ whose value in $x$ is equal to 1. Since constraints of type \eqref{IP:sfc1} ensure that the demands are satisfied, every pair $(s_i,t_i)$ is connected in $F$. Without loss of generality we assume $F$ is a forest, since otherwise removing any edge from a cycle of $F$ provides a better solution for \ref{IP:sf} which contradicts the optimality of $(\alpha,x)$. Now, according to Lemma \ref{main}, there exists a connective list of subgraphs $Q$ for $\langle G, F, \sigma\rangle$ such that every edge of $F$ appears in at most $3\log^2 n$ subgraphs of $Q$ where $n$ is the size of the graph. We construct a solution $(\alpha',x')$ to \ref{IP:good} in the following way:

 Since $Q$ is a connective list of subgraphs for $\langle G, F, \sigma\rangle$, every demand $(s_i,t_i)$ is connected in $\cup_{j=1}^i Q_j$. Therefore, for every $i$, $Q_i \in \mathcal{H}_i$ holds. We set $\alpha' = (3\log^2 n) \alpha$, $x'^i_{Q_i} = 1$ for every $1 \leq i \leq k$, and $x'^i_H = 0$ for all other variables. Note that since $Q_i \in \mathcal{H}_i$ for all $i$, then constraints of type \eqref{cons:c1} are trivially satisfied. We define $\hat{x}_e$ for an edge $e \in E(G)$ as $\hat{x}_e = \sum_{i=1}^k \sum_{H \ni e} x'^i_H$. Since every edge of $F$ appears in at most $3 \log^2 n$ subgraphs of $Q$ and $x$ is a feasible solution for \ref{IP:sf}, for every vertex $v$ in $V(G)$ we have
\begin{eqnarray*}
\frac{1}{b_v} \sum_{i=1}^k \sum_{H \subseteq G} \mathsf{deg}_H(v)x'^i_H & = & \frac{1}{b_v} \sum_{e \in \delta(v)} \hat{x}_e\\
& \leq & (3 \log^2 n) \frac{1}{b_v} \sum_{e \in \delta(v)} x_e \\
& \leq & (3 \log^2 n) \alpha\\
& = & \alpha'
\end{eqnarray*} 
and thus all constraints of type \eqref{cons:p1} are satisfied by $x'$. Moreover, 
\begin{eqnarray*}
	\frac{1}{w(opt)} \sum_{i=1}^k \sum_{H \subseteq G} w(H)x'^i_H & = & \frac{1}{w(opt)} \sum_{e \in E(G)} w(e)\hat{x}_e\\
	& \leq & (3 \log^2 n) \frac{1}{w(opt)} \sum_{e \in E(G)} w(e) \\
	& \leq & (3 \log^2 n) \alpha\\
	& = & \alpha'
\end{eqnarray*} 
and hence $x'$ meets Constraint \eqref{cons:p2}. Thus, $x'$ is a feasible solution for \ref{IP:good} and the proof is complete.
\end{proofof}
\section{Omitted Proofs of Section \ref{sec:OMPC}}
\begin{proofof}{Theorem \ref{thm:radife}}
		 	By Lemma \ref{lm:mohem}, if the optimal solution of the IP is 1, then there exists a deterministic algorithm with competitive ratio $O(k \log m)$. We show one can use a doubling technique to obtain an $O(k \log m)$-competitive algorithm without such assumption.
	 	
	 	We start by guessing $\alpha=1$. We use Algorithm \ref{alg:OMPC} to find an online solution. Since Algorithm \ref{alg:OMPC} is $O(k \log m)$-competitive if $\alpha =1$, there exists a constant $\beta$ such that no packing constraint is violated by more than $\beta(k\log m)$. At each online stage $j$ if updating the solution violates a packing constraint by more than $\beta(k\log m)$, then $\alpha>1$. Thus we say we go to the next phase and do the following. First we divide every coefficient in packing constraints by 2. Note that this is equivalent to doubling $\alpha$. Then we remove all previous solutions of the algorithm (we basically ignore them). We use Algorithm \ref{alg:OMPC} again on the new IP from the beginning. We claim that we lose a factor of at most 4 due to these updates.
	 	
	 	At each phase every packing constraint is violated by at most $\beta k \log m$. The total violation of each packing constraint after $l$ phases is
	 	\begin{equation}
	 	\sum_{i=1}^{l}{\beta 2^{i-1}k\log m}\leq \beta 2^ik\log m\enspace.
	 	\end{equation} 
	 	Moreover our guess about $\alpha$ is not exact, since we know $2^{i-1}< \alpha \leq 2^i$. Thus we might violate each packing constraint by an additional factor of $2$. Thus we can use Algorithm \ref{alg:OMPC} to obtain a deterministic integral algorithm with competitive ratio $4\beta(k \log m)$.
\end{proofof}

\section{Doubling Method}\label{sec:doubling}
In this section we discuss how we can address the issue that we might not know $w_{\opt}$ in advance. In particular by Lemma \ref{lm:mohem} if we are given the degree bounds and $w_{\text{opt}}$, the optimal solution for \ref{IP:good} is 1. Thus Algorithm \ref{alg:OMPC} provides an online solution for \ref{IP:good} of $O(\log m)$, where $m$ is the number of packing constraints. We show we can start with an initial value for $\wopt$ and update it through several phases. We use Algorithm \ref{alg:OMPC} at each phase. Eventually we lose a factor because of the number of phases we run Algorithm \ref{alg:OMPC} and a factor of our approximation for $\wopt$.

More precisely, let $w'_{\text{opt}}$ be our guess for $\wopt$. Initially we set $w'_{\text{opt}}=1$. We write \ref{IP:good} according to $w'_{\text{opt}}$ and use Algorithm \ref{IP:good} to obtain online solution $\x$. By Lemma \ref{lm:mohem} if $w'_{\text{opt}} \leq \wopt$, there exists a constant $c$ such that every packing constraint is violated by a factor of at most $c\log m$ in solution $\x$. At any online step, if a packing constraint is going to be violated by more that $c \log m$, we stop the algorithm. Now we know $w'_{\text{opt}} > \wopt$, thus we update $w'_{\text{opt}}=w'_{\text{opt}} \times r$. We ignore the current solution $\x$. We write \ref{IP:good} by the updated value of $w'_{\text{opt}}$ and do the same until no packing constraint is violated by more than $c \log m$.

Now we analyze the competitive ratio of the current solution. Let $l$ denote the number of phases we updated $w'_{\text{opt}}$ and run Algorithm \ref{alg:OMPC}.

\begin{lemma}\label{lm:update1}
If $r \geq 2$ each degree constraint is violated by $O(l\log m)$, and the weight constraint is violated by $O(r\log m)$, where $m$ is the number of packing constraints.
\end{lemma}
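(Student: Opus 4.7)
The plan is to bound, phase by phase, the contribution of Algorithm~\ref{alg:OMPC} to the degree constraints and to the weight constraint of \ref{IP:good}, and then sum. Write $w'^{(i)}_{\opt}=r^{i-1}$ for the guess used in phase $i$, and let $c$ be the absolute constant from Lemma~\ref{lm:mohem} so that, whenever $w'^{(i)}_{\opt}\ge w_{\opt}$, Algorithm~\ref{alg:OMPC} (run on the IP with normalizer $w'^{(i)}_{\opt}$, whose optimal value is at most $1$) produces a solution violating every packing constraint by a factor of at most $c\log m$. My first step is to exploit the fact that the doubling procedure advanced from phase $l-1$ to phase $l$: this means that while $w'^{(l-1)}_{\opt}$ was in use, some packing constraint was violated by more than $c\log m$. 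By the preceding observation this forces $w'^{(l-1)}_{\opt}<w_{\opt}$, and hence the key bound $w'^{(l)}_{\opt}=r\cdot w'^{(l-1)}_{\opt}<r\cdot w_{\opt}$.

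Next I would handle the degree constraints. The point is that the constraints of type \eqref{cons:p1} do not involve $w'_{\opt}$ at all: they depend only on the graph and the bounds $b_v$. Therefore each phase, viewed in isolation, contributes at most $c\log m$ to the load of any vertex $v$ (because its $\alpha$ was declared to be at most $c\log m$ when we stopped the phase). Summing over the $l$ phases (whose solutions we are keeping) gives total load at most $l\cdot c\log m$ on any vertex, that is, each degree constraint is violated by $O(l\log m)$.

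For the weight bound I would track the geometric growth explicitly. In phase $i$, the weight normalizer used in \eqref{cons:p2} is $w'^{(i)}_{\opt}$, so the total weight bought in phase $i$ is at most $c(\log m)\cdot w'^{(i)}_{\opt}=c(\log m)\,r^{i-1}$. Summing over $i=1,\dots,l$ and using $r^{l-1}<r\cdot w_{\opt}$ from the first step yields total weight bounded by $\frac{c\,r^2}{r-1}\cdot w_{\opt}\log m$. For $r\ge 2$ we have $\frac{r^2}{r-1}\le 2r$, so the weight constraint is violated by $O(r\log m)$, as required.

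The one piece of care needed is the phase-boundary bookkeeping: a phase is terminated the instant adding the next element would push some constraint over $c\log m$, so the partial solution we keep from that phase can overshoot by at most one added element. Since the coefficients in \ref{IP:good} are bounded (each edge contributes at most $1$ to any vertex's load and weight at most $\max_e w(e)$, and the weight bound is large relative to a single edge), this overshoot is absorbed in the constants hidden by the $O(\cdot)$. The main work in writing up the proof is therefore not an obstacle but a careful invocation of Lemma~\ref{lm:mohem} in each phase together with the geometric-series calculation sketched above.
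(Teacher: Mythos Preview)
Your proposal is correct and follows essentially the same approach as the paper: both arguments sum the per-phase $c\log m$ bound over the $l$ phases for the degree constraints, and both bound the total weight by the geometric series $c(\log m)\sum_{i=1}^{l} r^{i-1}$, then use the failure of phase $l-1$ to conclude $r^{l-1}<r\,w_{\opt}$ and hence obtain the $O(r\log m)$ weight violation. Your extra remark about a possible one-element overshoot at phase boundaries is not needed under the paper's convention (a phase is stopped \emph{before} the offending update is committed), but it is harmless.
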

\begin{proof}
Since at each phase no packing constraint is violated by more than $c \log m$, the total violation for every packing constraint is no more than $lc \log m$. Thus the total violation in every degree bound is $O(l\log m)$. 

At each phase, the packing constraint corresponding to weights is violated by $c \log m$, however $w'_{\text{opt}}$ is changing as well. The total violation of the weights is $\sum_{i=0}^{l-1}{r^i c\log m}$, since at each phase $i$, $w'_{\text{opt}}=r^{i-1}$. Since $r\geq 2$, $\sum_{i=0}^{l-1}{r^i c\log m} < 2r^{l-1}=2w'_{\text{opt}}$. Therefore the packing constraint corresponding to weights is also violated by no more than $2c \log m$. However $\frac{1}{r}w'_{\text{opt}} < \wopt \leq w'_{\text{opt}}$. Thus the total weight is no more than $2rc \log m\wopt$.
\end{proof}

The following two lemmas follow directly from Lemma \ref{lm:update1}.
\begin{lemma}\label{lm:update2}
Given graph $G$ and degree bounds. let $\wopt$ denote the optimal weight for \oewdbsf{}. There exists an online algorithm for \ref{IP:good} that violates the weight packing constraint by $O(\log n)$ and violates each degree bound packing constraints by $O(\log n \log \wopt)$.
\end{lemma}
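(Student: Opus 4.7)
The plan is to derive Lemma~\ref{lm:update2} as a direct corollary of Lemma~\ref{lm:update1} by choosing the doubling parameter $r$ and bounding the number of phases $l$. Recall that the packing constraints of \ref{IP:good} consist of one load constraint per vertex and a single weight constraint, so the total number of packing constraints is $m = n+1$, giving $\log m = \Theta(\log n)$.

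First I would set $r = 2$ in the doubling scheme described above. Since $w'_{\text{opt}}$ starts at $1$ and is multiplied by $2$ at the end of every phase, and since Algorithm~\ref{alg:OMPC} is $O(\log m)$-competitive once the guess satisfies $w'_{\text{opt}} \ge \wopt$ (by Lemma~\ref{lm:mohem} applied to the rescaled IP), the algorithm stops triggering a new phase as soon as $w'_{\text{opt}} \ge \wopt$. Hence the total number of phases satisfies $l \le \lceil \log_2 \wopt \rceil + 1 = O(\log \wopt)$.

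Next I would simply plug $r = 2$ and the bound on $l$ into Lemma~\ref{lm:update1}: the weight packing constraint is violated by at most
\[
O(r \log m) \;=\; O(2 \log n) \;=\; O(\log n),
\]
while each degree packing constraint is violated by at most
\[
O(l \log m) \;=\; O(\log \wopt \cdot \log n).
\]
This yields exactly the bounds claimed in the statement.

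The main obstacle, which is very mild, is justifying that the bound on $l$ is correct: one must observe that termination of the doubling process is governed by the feasibility threshold of Algorithm~\ref{alg:OMPC} on the rescaled instance, and that once the guess overshoots $\wopt$ no subsequent phase is ever started. Everything else is pure arithmetic substitution into the statement of Lemma~\ref{lm:update1}, so the proof should consist of at most a few lines once those two ingredients are spelled out.
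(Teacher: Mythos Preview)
Your proposal is correct and follows exactly the paper's own proof: set $r=2$, observe that the number of phases is $l=O(\log \wopt)$, and invoke Lemma~\ref{lm:update1} together with $m=n+1$ to obtain the stated bounds. If anything, you spell out more details (the value of $m$ and the termination argument for the doubling) than the paper does.
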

\begin{proof}
	Let $r=2$. The number of phases $l=O(\log \wopt)$. Thus by Lemma \ref{lm:update1} there exists an online algorithm for \ref{IP:good} that violates the weight packing constraint by $O(\log n)$ and violates each degree bound packing constraints by $O(\log n \log \wopt)$.
\end{proof}
\begin{lemma}\label{lm:update3}
Given graph $G$ and degree bounds. let $\wopt$ denote the optimal weight for \oewdbsf{}. There exists an online algorithm for \ref{IP:good} that violates the weight packing constraint and each degree bound packing constraints by $O(\log n \frac{\log \wopt}{\log \log \wopt})$.
\end{lemma}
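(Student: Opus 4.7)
The plan is to apply Lemma~\ref{lm:update1} with a single carefully chosen value of $r$ that balances the two violation bounds it produces. Recall that for any choice of $r\geq 2$, the doubling scheme violates the weight constraint by $O(r\log m)$ and each degree-bound constraint by $O(l\log m)$, where $l$ is the number of phases executed. Since we start with $w'_{\opt}=1$ and multiply by $r$ each phase, terminating as soon as $w'_{\opt}\geq \wopt$, we have $l = O(\log_r \wopt) = O\!\left(\log \wopt / \log r\right)$. Lemma~\ref{lm:update2} corresponds to the extreme choice $r=2$, which makes the weight violation $O(\log m)$ but the degree violation as large as $O(\log m\cdot \log \wopt)$; to get matching bounds for both, $r$ must grow with $\wopt$.

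I would set $r := \log\wopt / \log\log\wopt$. For $\wopt$ larger than an absolute constant this satisfies $r\geq 2$, so Lemma~\ref{lm:update1} applies; the finitely many small cases can be absorbed into the $O(\cdot)$ or handled by Lemma~\ref{lm:update2} directly. An elementary calculation gives
$$\log r \;=\; \log\log\wopt - \log\log\log\wopt \;=\; \Theta(\log\log\wopt),$$
so that
$$l \;=\; O\!\left(\frac{\log\wopt}{\log r}\right) \;=\; O\!\left(\frac{\log\wopt}{\log\log\wopt}\right).$$

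Substituting into Lemma~\ref{lm:update1}, the weight constraint is violated by $O(r\log m) = O\!\left(\log m\cdot \frac{\log\wopt}{\log\log\wopt}\right)$ and each degree-bound constraint by $O(l\log m) = O\!\left(\log m\cdot \frac{\log\wopt}{\log\log\wopt}\right)$. Since the number of packing constraints in \ref{IP:good} is $m = n+1$, both bounds evaluate to $O\!\left(\log n\cdot \log\wopt / \log\log\wopt\right)$, which is exactly the claim.

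There is no real obstacle beyond calibrating $r$ against $l$: the doubling machinery, the phase analysis, and the $O(\log m)$ competitive guarantee per phase are already in place from Lemma~\ref{lm:update1} and Theorem~\ref{thm:radife}. The only point worth being a bit careful about is verifying $\log r = \Theta(\log\log \wopt)$ for the chosen $r$ (so that the degree bound on $l$ is tight up to constants), which is the only nontrivial estimate in the argument.
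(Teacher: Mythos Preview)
Your proposal is correct and follows essentially the same approach as the paper: set $r=\log\wopt/\log\log\wopt$, observe that this makes the number of phases $l=O(\log\wopt/\log\log\wopt)$, and invoke Lemma~\ref{lm:update1}. You supply more detail than the paper (the estimate $\log r=\Theta(\log\log\wopt)$, the handling of small $\wopt$, and the identification $m=n+1$), but the argument is the same.
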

\begin{proof}
		Let $r=\frac{\log \wopt}{\log \log \wopt}$. The number of phases $l=O(\frac{\log \wopt}{\log \log \wopt})$. Thus by Lemma \ref{lm:update1} there exists an online algorithm for \ref{IP:good} that violates the weight packing constraint and each degree bound packing constraints by $O(\log n \frac{\log \wopt}{\log \log \wopt})$.
\end{proof}

\section{A Polynomial Algorithm for Finding the Path with the Minimum Cost}
Here we present a polynomial time method to find a set $S$ that satisfies $j+1$-th constraint and minimizes $\tau_S(\mathcal{S}(j))$. In our problem we can assume every two endpoints of previous demands have been contracted into one node. With this assumption the set $S$ is in form of a path from $s_j$ to $t_j$. The method we use is dynamic programming. Let $dp$ be a dynamic table such that $dp(l,v,w)$ denotes the minimum degree cost of a path with length $l<n$ and weight $w$ from $s_j$ to vertex $v\in V$. The degree cost of a path $P$ is the sum of the terms corresponding to degree constraints in the calculation of $\tau_P(\mathcal{S}(j))$. The set $S$ we are looking for has some length $l'$ and some weight $w'$. Knowing these two parameters, we can calculate $\tau_S(\mathcal{S}(j))$ using $w'$ and $dp(l',t_j,w')$.

If the maximum weight of an edge be a polynomial of $n$ then the number of entries in the dynamic table $dp$ is polynomial. Now we briefly explain how to calculate the value of each $dp(l,v,w)$. Initially set $dp(0, s_j, 0)$ to zero and $dp(l,v,w)$ to infinity for other entries. At every moment each entry with cost less than infinity represents a path from $s_j$. At every iteration we update new entries by adding one vertex to the existing paths.

We can now find the path minimizing $\tau_S(\mathcal{S}(j))$ by looking over $dp(l', t_j, w')$, for all $0\leq l'\leq n$ and $0\leq w' \leq n\max_{e\in E}\{w(e)\}$. Since the size of $dp$ and the number of iterations are polynomials of $n$, the whole method runs in polynomial time.

\section{Rounding Lemma}\label{sec:rounding}
For a given tree $T=(E(T), V(T))$, let $\mathcal{\pi}=\langle \pi_1, \ldots, \pi_n \rangle$ be a permutation on $V(T)$, and $\mathcal{F}$ be a collection of subsets of $E(T)$. Suppose a probability $p_{\pi_i,\pi_j}$ is assigned to every $\mathcal{F}_{\pi_i,\pi_j}$ in a way that $\sum_{j=1}^{i-1}p_{\pi_i,\pi_j}=1$, for every $1<i\leq n$. We define the load of $p$ on an edge $e$ as $L_p(e):=\sum_{(i, j):e\in \mathcal{F}_{\pi_i, \pi_j}}p_{\pi_i, \pi_j}$ and the overall load $L_p$ as $L_p:=\max_{e\in E(t)}\{L_p(e)\}$.

The following lemma states the existence of a rounding $q$ for $p$, which with a high probability has a bounded load with respect to $L_p$ and $\log n$.

\begin{lemma} \label{rounding_lemma}
	There exists a new assignment $q_{\pi_i, \pi_j}\in\{0,1\}$ to every $\mathcal{F}_{\pi_i, \pi_j}$, such that $\sum_{j=1}^{i-1}q_{\pi_i,\pi_j}=1$, for every $1<i\leq n$, and the new load $L_q$ is at most $O(\max\{L_p, \log n\})$ with a high probability.
\end{lemma}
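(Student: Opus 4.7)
The plan is to round $p$ independently across rows. For each $i \in \{2,\dots,n\}$, I will independently sample a single index $J_i \in \{1,\dots,i-1\}$ according to the categorical distribution $(p_{\pi_i,\pi_1},\dots,p_{\pi_i,\pi_{i-1}})$, and set $q_{\pi_i,\pi_{J_i}} = 1$ with all other $q_{\pi_i,\pi_j} = 0$. By construction the row-sum constraint $\sum_{j=1}^{i-1} q_{\pi_i,\pi_j} = 1$ is satisfied deterministically, and the $J_i$'s are mutually independent across $i$. This is what makes the analysis of the load clean.

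Fix an edge $e \in E(T)$. Writing $X_i^{(e)} := \mathbb{1}[e \in \mathcal{F}_{\pi_i,\pi_{J_i}}]$, we have $L_q(e) = \sum_{i=2}^{n} X_i^{(e)}$, a sum of independent $\{0,1\}$ random variables with $\mathbb{E}[X_i^{(e)}] = \sum_{j<i,\ e\in\mathcal{F}_{\pi_i,\pi_j}} p_{\pi_i,\pi_j}$. Hence $\mathbb{E}[L_q(e)] = L_p(e) \le L_p$. I now apply a standard multiplicative Chernoff bound to $L_q(e)$. Set the target threshold $T := c \cdot \max\{L_p,\log n\}$ for a sufficiently large absolute constant $c$. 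In the regime $L_p \ge \log n$, the bound $\Pr[L_q(e) > (1+\delta)L_p] \le \exp(-\delta^2 L_p/3)$ with $\delta = c-1$ gives $\Pr[L_q(e) > T] \le n^{-3}$; in the regime $L_p < \log n$, I use the Chernoff tail in the form $\Pr[L_q(e) > t] \le (e\mathbb{E}[L_q(e)]/t)^t$, which for $t = c\log n$ yields a bound of $(e/c)^{c\log n} \le n^{-3}$ once $c$ is large enough.

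A union bound over the at most $n-1$ edges of $T$ then shows that simultaneously $L_q(e) \le T$ for every $e$ with probability at least $1 - 1/n$, i.e., $L_q = O(\max\{L_p,\log n\})$ with high probability, which is exactly the claim. The only mildly delicate point, and the one on which one should be careful, is that a naive Chernoff application yields $O(L_p + \log n)$ only when $L_p \gtrsim \log n$; the low-$L_p$ regime requires the alternative tail $(e\mu/t)^t$ rather than the multiplicative $(1+\delta)$ form, but both are standard. No concentration across different edges is needed since we only take a union bound, which is why independence of the $J_i$'s (but not any joint structure across edges) suffices.
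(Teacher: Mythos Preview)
Your proof is correct and follows essentially the same approach as the paper: sample one index per row according to the distribution $p_{\pi_i,\cdot}$, compute the expected load per edge, apply a Chernoff-type tail, and union bound over the $n-1$ edges. The only cosmetic difference is that you group the indicators by row into a single Bernoulli $X_i^{(e)}$ (so independence across $i$ is immediate), whereas the paper keeps the individual variables $q_{\pi_i,\pi_j}$ and invokes negative correlation within a row; and you split the tail analysis into two regimes while the paper absorbs both by choosing $\delta=\Theta(\log n/L_p(e))$ in a single Chernoff form.
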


\begin{proof}
	For every $1<i\leq n$ we need to set at least one of $q_{\pi_i, \pi_1},\ldots,q_{\pi_i, \pi_{i-1}}$ to 1.
	To do so, we select one of the $q_{\pi_i, \pi_j}$'s using the following random process. Take a random number $r\in [0,1]$. Let $j$ be the smallest index for which $\sum_{k=1}^{j}p_{\pi_i, \pi_k}\geq r$. Set $q_{\pi_i, \pi_j}$ to 1 and the remaining to 0. Note that the probability of every $q_{\pi_i, \pi_j}$ being 1 is exactly $p_{\pi_i, \pi_j}$.
	
	Using this random process, the expected load of $q$ on every edge $e$ remains the same as the load of $p$ on that edge, because
	\begin{align*}
		E[L_q(e)]&=E\bigg[\sum_{(i,j):e\in\mathcal{F}_{\pi_i, \pi_j}}{q_{\pi_i, \pi_j}}\bigg]\\&=\sum_{(i,j):e\in\mathcal{F}_{\pi_i, \pi_j}}{E[q_{\pi_i, \pi_j}]}\\&=\sum_{(i,j):e\in\mathcal{F}_{\pi_i, \pi_j}}{Pr[q_{\pi_i, \pi_j}=1]}\\&=\sum_{(i,j):e\in\mathcal{F}_{\pi_i, \pi_j}}{p_{\pi_i, \pi_j}}=L_p(e) \enspace .	
	\end{align*}
	
	Moreover, $L_q(e)$ is in fact the summation of a number of binary random variables which are not positively correlated \footnote{In particular, $q_{\pi_{i_1}, \pi_{j_1}}$ and $q_{\pi_{i_2}, \pi_{j_2}}$ are independent for $i_1\neq i_2$, and are negatively correlated for $i_1=i_2$ and $j_1\neq j_2$.}. Therefore, this summation can be upper bounded by the Chernoff bound: $$Pr[L_q(e)>(1+\delta)L_p(e)] \leq e^{-\frac{\delta L_p(e)}{3}} \enspace .$$
	
	In the above variation of the Chernoff bound, $\delta>1$. Mention that to achieve a small enough probability \footnote{At most $O(\frac{1}{n^2})$.}, it suffices for $\delta$ to be at least $\Omega(\log n/L_p(e))$. Finally, we use the Union bound to show that $L_q\in O(\max\{L_p, \log n\})$ with a high probability.
	
	\begin{align*}
		Pr[L_q\in O(\max\{L_p, \log n\})] &= Pr[\forall e\in E(T): L_q(e)\in O(\max\{L_p(e), \log n\})] \\
		&= 1 - Pr[\exists e\in E(T): L_q(e)\notin O(\max\{L_p(e), \log n\})]\\
		&\geq 1 - \sum_{e\in E(T)} Pr[L_q(e)\notin O(\max\{L_p(e), \log n\})]\\
		&\geq 1 - \sum_{e\in E(T)} Pr[L_q(e)>c\big(1+\frac{\log n}{L_p(e)}\big)L_p(e)] \\ &\geq 1-\sum_{e\in E(T)}O(\frac{1}{n^2}) = 1-O(\frac{1}{n})\enspace .
	\end{align*}
	
\end{proof}

\section{Reduction from weight guarantee to edge-wise guarantee}\label{sec:redavglemma}
\begin{lemma}\label{mohem}
	Let $\mathcal{R}$ be a subspace of $\mathbb{R}^n$ that contains all points of $\mathbb{R}^n$ with non-negative coordinates and $P$ be a convex set of points in $\mathcal{R}$. If for every point $\hat{x} = (x_1,x_2,\ldots,x_n) \in \mathcal{R}$ there exists a point $\hat{p} \in P$ such that
	$$\hat{p}.\hat{x} \leq k\sum_{i=1}^n x_i$$
	then $P$ contains a point $\hat{r}$ such that $\max_{i=1}^n r_i \leq k$.
\end{lemma}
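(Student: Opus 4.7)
The lemma has a Farkas / LP-duality flavor: a pointwise certificate (one $\hat p$ per direction $\hat x$) is promoted to a uniform certificate (a single $\hat r$ bounding every coordinate). The plan is to argue by contradiction via the separating hyperplane theorem applied to a suitably enlarged convex set.

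Assume for contradiction that no such $\hat r \in P$ exists, and write $v := \inf_{p \in P} \max_i p_i$. Under the natural standing assumption that $P$ is closed, the contrapositive gives $v > k$; pick any $k' \in (k, v)$. I would then work with the Minkowski sum $Q := P + \mathbb{R}^n_{\ge 0}$, which is convex. The all-$k'$ vector $k'\mathbf{1}$ does not lie in $Q$: any representation $k'\mathbf{1} = p + r$ with $p \in P$, $r \ge 0$ would force $p \le k'\mathbf{1}$ coordinatewise, whence $\max_i p_i \le k' < v$, contradicting the definition of $v$.

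By the separating hyperplane theorem there exist a nonzero $y \in \mathbb{R}^n$ and a scalar $\alpha$ with $y \cdot q \ge \alpha \ge k' \sum_i y_i$ for every $q \in Q$. The next step is to check $y \ge 0$: each ray $q + \lambda e_i$ ($\lambda \ge 0$) remains in $Q$ because $e_i$ lies in its recession cone, so $y \cdot q + \lambda y_i \ge \alpha$ must hold for all $\lambda \ge 0$, which forces $y_i \ge 0$. Hence $y \in \mathbb{R}^n_{\ge 0} \subseteq \mathcal{R}$ and, being nonzero, satisfies $\sum_i y_i > 0$.

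Finally, I would invoke the hypothesis with $\hat x = y$ to obtain some $\hat p \in P$ with $\hat p \cdot y \le k \sum_i y_i$. Since $\hat p \in P \subseteq Q$, the separation inequality gives $\hat p \cdot y \ge k' \sum_i y_i$, so $(k - k') \sum_i y_i \ge 0$, contradicting $k < k'$ and $\sum_i y_i > 0$. The main subtlety to pin down, and likely the chief obstacle, is a closedness assumption on $P$ that rules out the boundary case $v = k$ with unattained infimum (otherwise the conclusion as literally stated may need to be relaxed to $\inf_{p\in P}\max_i p_i \le k$); the core conceptual step is the separation-plus-recession-cone argument that makes the constructed $y$ admissible in the hypothesis.
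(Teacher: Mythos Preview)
Your approach is essentially the paper's: form the upward closure $Q = P + \mathbb{R}^n_{\ge 0}$ (the paper calls it $P'$), separate the target point from it by a hyperplane, use the recession cone of $Q$ to force the separating normal $y \ge 0$, and then feed $\hat x = y$ into the hypothesis for a contradiction. The only difference is cosmetic: the paper separates $(k,\ldots,k)$ directly with strict inequalities rather than introducing $v$ and an intermediate $k'$, and your explicit flag about closedness is well-placed, since the paper's strict-separation step tacitly relies on the same regularity.
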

\begin{proof}
	We define $P'$ as the set of all points in $\mathbb{R}^n$ whose all indices are greater than or equal to the corresponding indices of a point in $P$. In other words 
	$$P' = \{\hat{p} \in \mathbb{R}^n | \exists \hat{q} \in P \text{ such that } p_i \geq q_i \text{ for all } 1 \leq i \leq n\}.$$
	We show in the rest that $(k,k,\ldots,k) \in P'$ which immediately implies the lemma. To this end, suppose for the sake of contradiction that $(k,k,\ldots,k) \notin P'$. Note that, since $P$ is a convex set, so is $P'$. Therefore, there exists a hyperplane that separates all points of $P'$ from point $(k,k,\ldots,k)$. More precisely, there exist coefficients $h_0,h_1,\ldots,h_n$ such that 
	\begin{equation} \label{firstin}
		\sum_{i = 1}^n h_ip_i > h_0
	\end{equation}
	for all points $\hat{p} \in P'$ and 
	\begin{equation}\label{secondin}
		\sum_{i = 1}^n h_ik < h_0.
	\end{equation}
	Due to the construction of $P'$ we are guaranteed that all coefficients $h_1, h_2, \ldots, h_n$ are non-negative numbers since otherwise for any index $i$ such that $h_i < 0$ there exists a point in $P'$ whose $i$'th index is large enough to violate Inequality \eqref{firstin}. Now let $\hat{x} = (h_1, h_2, \ldots, h_n)$. By Inequalities \eqref{firstin} and \eqref{secondin} we have 
	$$\hat{p}.\hat{x} = \sum_{i = 1}^n x_ip_i = \sum_{i = 1}^n h_ip_i > h_0 \geq k \sum_{i=1}^n h_i = k\sum_{i=1}^n x_i$$
	for every $\hat{p} \in P'$ which means there is no $\hat{p} \in P$ such that $\hat{p}.\hat{x} \leq k \sum_{i=1}^n x_i$. This contradicts the assumption of the lemma.
\end{proof}

\end{document}